\newtheorem{theorem}{\bf Theorem}
\newtheorem{corollary}{\bf Corollary}
\newtheorem{definition}{\bf Definition}
\newtheorem{remark}{\bf Remark}
\def\mc{{\mathcal C}}
\def\e{\mathbb{E}}
\def\th{\mathbf{\mathtt g}}
\newcommand{\supp}{\text{supp}}
\newcommand{\E}{\mathbb{E}}
\newcommand{\ket}[1]{|#1\rangle}
\newcommand{\bra}[1]{\langle#1|}
\def\Pr{\text{\rm{Pr}}}
\def\tx{{\mathbf{\mathtt x}}}
\newcommand{\tr}{\text{\rm{tr}}}
\begin{document}

\title{Quantum Achievability Proof via Collision Relative Entrop}
\author{Salman Beigi$^1$ and Amin Gohari$^{1,2}$
\\ \emph{\small $^1$School of Mathematics, Institute for Research in Fundamental Sciences (IPM), Tehran, Iran}\\
\emph{\small $^2$Department of Electrical Engineering, Sharif University of Technology, Tehran, Iran}}

\date{January 11, 2017}

\maketitle

\begin{abstract}
In this paper, we provide a simple framework for deriving one-shot achievable bounds for some problems in quantum information theory. Our framework is based on the joint convexity of the exponential of the collision relative entropy, and is a (partial) quantum generalization of the technique of Yassaee et al.\ (2013) from classical information theory. Based on this framework, we derive one-shot achievable bounds for the problems of communication over classical-quantum channels, quantum hypothesis testing, and classical data compression with quantum side information. We argue that our one-shot achievable bounds are strong enough to give the asymptotic achievable rates of these problems even up to the second order.  
 \end{abstract}

\section{Introduction} 
Yassaee et al.~\cite{Yassaee} have recently proposed a general technique for proving upper bounds on the probability of error 
for classical network information theory problems in the \emph{one-shot} case. By one-shot we mean a setup whose goal is to find a strategy for (say) transmitting one out of $M$ messages in a \emph{single use} of the network with (average or maximal) probability of error being as small as possible. This is unlike the traditional setup where the focus is on asymptotic rates and vanishing error probability.
Some salient features of the method of \cite{Yassaee} are as follows: (1) it provides one-shot results whose form resembles (in a systematic way) that of asymptotic results. In particular, the denominator of the one-shot bounds given by this technique consists of a product of multiple terms, each of which resembles the covering or packing lemmas of the asymptotic regime; (2) it fundamentally differs from the traditional methods where first various error events are identified, and then union bound (and packing or covering lemmas on individual error events) is used. Instead, the technique uses \emph{Jensen's inequality} only and is able to consider the effect of all the error events at once (hence yielding stronger results). Further, since one-shot versions of mutual packing and covering lemma are not known to exist, this technique outperforms the traditional ones;
(3) the technique is able to recover the second-order asymptotics (dispersion) of the point-to-point channel capacity (i.e., a channel with one transmitter and one receiver) and provides new finite blocklength achievability results for many other network problems.

The technique of~\cite{Yassaee} can roughly be explained as follows. For the problem of point-to-point channel capacity, for example, this technique (as usual) uses a random coding to encode the message.  However, to decode the message at the receiver's side, it does not use the maximum likelihood (ML) decision rule. Instead, the technique uses a decoder called \emph{stochastic likelihood coder} (SLC) in~\cite{Yassaee}. To describe the difference between an SLC and an ML decoder, assume that to send a message $m\in \{1,\dots, M\}$ over a channel with input $X$ and output $Y$ described by conditional distribution $P_{Y|X}$, we use codewords of length one. 
So each message $m$ is mapped to an input symbol $x_m$ at the encoder. Then having received $y$, to encode the message an ML decoder picks $\hat{m}$ such that $P_{M|Y}(\hat{m}|y)$ is maximized. Here it is assumed that the message $m\in \{1,\dots, M\}$ is chosen uniformly at random, so we have a joint distribution $P_{MXY}$ and the conditional distribution $P_{M|Y}$ is computed with respect to this joint distribution.

On the other hand, to decode the message from output $Y=y$, an SLC picks $\hat m$ randomly according to the conditional distribution $P_{M|Y=y}$. That is, an SLC decodes $Y=y$ to $\hat m$ with probability
\begin{align}
P_{M|Y}(\hat{m}|y)&=\dfrac{P_{M,Y}(\hat{m},y)}{P_{Y}(y)}=\dfrac{P_{Y|X}(y|x_{\hat{m}})}{\sum_{m'}P_{Y|X}(y|x_{m'})},\label{eq:SLCeq}
\end{align}
where for the equality we use the fact that the distribution on $\{1,\dots, M\}$ is uniform which induces the uniform distribution on the codebook $\{x_1,\dots, x_M\}$.
This decoder is called stochastic likelihood coder in~\cite{Yassaee} as it is a probabilistic decoder and uses the likelihood of occurrence of messages to choose them.  

Putting in the terminology of quantum information theory, SLC is nothing but a pretty good measurement (also called square-root measurement), or a transpose channel in general.  To see this, consider a classical-quantum channel that under input $x$ outputs quantum state 
$\rho_{x}$. Then for a codebook $\{x_1,\dots, x_M\}$, pretty good measurement is a measurement with POVM elements
\begin{align*}
E_{\hat m}&=\Big(\sum_{m'}\rho_{x_{m'}}\Big)^{-\frac{1}{2}}\rho_{x_{\hat m}}\Big(\sum_{m'}\rho_{x_{m'}}\Big)^{-\frac{1}{2}}.
\end{align*}
Now assuming that $\rho_x$ is a diagonal density matrix with diagonal elements $P_{Y|X}(y|x)$, i.e., the channel is classic, then $E_{\hat m}$ is a diagonal matrix with diagonal elements given by~\eqref{eq:SLCeq}. So SLC is indeed a special case of pretty good measurements in quantum theory. For the definition of transpose channel and to verify that it is a generalization of SLC see e.g.,~\cite{NgM}.

In the technique proposed by \cite{Yassaee}, the expected value of the probability of successful decoding for a random codebook  is bounded from below using Jensen's inequality. This simple technique is general enough to be applied to several problems in classical network information theory, and interestingly gives a tight asymptotic bound up to the second order for the point-to-point channel capacity problem.\\

\noindent \textbf{Our contributions:}
Motivated by the appealing features of the technique of \cite{Yassaee}, we are interested to see if it extends to quantum information theory. In this paper we provide a partial generalization that in particular can be applied to the problems of communication over classical-quantum (c-q) channels, quantum hypothesis testing, and classical data compression with quantum side information. To analyze these  problems, we observe that the probability of successful decoding can be written in terms of \emph{collision relative entropy}~\cite{RennerThesis}. Then to find a lower bound on the expectation value of the probability of successful decoding (over the random choice of codewords) we use the joint convexity of the exponential of collision relative entropy and apply Jenssen's inequality. This, for instance, gives a lower bound on the one-shot capacity of c-q channels in terms of collision relative entropy. To compute the asymptotics of this  bound we prove a lower bound on the collision relative entropy in terms of another quantity called \emph{information spectrum relative entropy} whose asymptotics, up to the second order, has been computed by Tomamichel and Hayashi~\cite{TomamichelHayashi}. This gives a simple proof for the achievability parts of the second order asymptotics of the problems of the capacity of c-q channels recently computed by Tomamichel and Tan~\cite{MarcoTan}, quantum hypothesis testing previously computed by Tomamichel and Hayashi~\cite{TomamichelHayashi} and Li~\cite{Li}, and classical data compression with quantum side information previously derived by Tomamichel and Hayashi~\cite{TomamichelHayashi}.  \\

\noindent \textbf{Related works:}
One-shot achievable bounds (and converses) for c-q channel coding were also derived by Wang and Renner \cite{WangRenner} and Datta et al.~\cite{DMSB} (see also \cite{HayashiNagaoka}). The one-shot achievability result of~\cite{WangRenner} has very recently been used by Tomamichel and Tan \cite{MarcoTan} to compute the dispersion of c-q channels. 
This one-shot achievable bound is proved using the Hayashi-Nagaoka operator inequality~\cite[Lemma 2]{HayashiNagaoka}. To prove our results however, we do not use this inequality. 
Quantum hypothesis testing and data compression with quantum side information in the one-shot case have also been studied in~\cite{DMSB,Hayashi}, and~\cite{RennerRenes} respectively.

\subsection{Notation}\label{sec:notation}
 Throughout this paper we assume that all random variables have finite alphabets and quantum systems have finite dimensions. 
Quantum and classical systems are denoted by uppercase letters $B, X$ etc. 
We save $X$ to denote classical systems whose alphabet set is denoted by calligraphic letter $\mathcal X$. So a density matrix $\rho_X$ over $X$ is determined by its diagonal elements that are indexed by elements $x\in \mathcal X$, i.e.,
$\rho_X=\sum_{x\in \mathcal X} p_X(x) \ket x\bra x,$
where $p_X$ is a distribution over $\mathcal X$. 
In this paper we consider samples from this distribution whose outcomes are elements of $\mathcal X$. However, to distinguish elements $x\in \mathcal X$ with samples drawn from $p_X$ we denote the latter by \emph{mathtt} lowercase $\mathtt x$. As a result, letting $\rho_{XB}$ to be a density matrix over the joint system $XB$, 
$$\rho_{XB} = \sum_{x\in \mathcal X} p_X(x)\ket x\bra x\otimes \rho_x,$$
it is important to differentiate between $\rho_x$ and $\rho_{\tx}$; while the former is the state of $B$ when the subsystem $X$ is in the state $x$, the latter is a random density matrix, taking the value of $\rho_x$ when $\tx=x$. Needless to say, $\rho_X$ is the marginal density matrix over $X$.

For a positive semidefinite $U$ we use $U^{-1}$ for the inverse of $U$ restricted to $\supp(U)$ which is the span of eigenvectors of $U$ with non-zero eigenvalues. Thus $UU^{-1}=U^{-1}U$ is equal to the projection on $\supp(U)$ and not necessarily $I$, the identity operator. 

All logarithms and exponential functions in the paper are in base $2$.

 
\section{Collision relative entropy and information spectrum}
 
For density matrix $\rho$ and positive semidefinite (not necessarily normalized) $\sigma$, the collision relative entropy or collision divergence is defined by  
\begin{align}\label{eq:D2}
D_2(\rho \| \sigma) = \log \tr\left[    \left(  \sigma^{-\frac 1 4} \rho \sigma^{-\frac 1 4}  \right)^2  \right],
\end{align} 
if $\supp(\rho)\subseteq \supp(\sigma)$, and by $D_2(\rho\| \sigma)=\infty$ otherwise. 
This quantity (as a conditional entropy) was first introduced in \cite[Definition 5.3.1]{RennerThesis} and has found applications in quantum information theory~\cite{MCW2, DFW2, DBWR}. Collision divergence is a member of the family of quantum (sandwiched) R\'enyi divergences defined in \cite{Lennertetal, Wildeetal} (and before that in talks \cite{TMarco} and \cite{Fehr}), and further studied in~\cite{FrankLieb, Beigi2}. Some of the known properties of $D_2(\cdot \| \cdot )$  are as follows:

\begin{theorem}{\rm \cite{Lennertetal, Wildeetal}}
Collision relative entropy satisfies the following properties:
\begin{itemize}
\item \emph{(Positivity)} For (normalized) density matrices $\rho, \sigma$ we have $D_2(\rho\| \sigma) \geq 0$.
\item \emph{(Data processing)} For any quantum channel $\mathcal N$  we have $D_2(\rho\| \sigma) \geq D_2(\mathcal N(\rho)\| \mathcal N(\sigma))$.
\item \emph{(Joint convexity)} $\exp D_2(\cdot \| \cdot)$ is jointly convex, i.e., for density matrices $\rho_x$ and positive semidefinite $\sigma_x$ and probability distribution $p(x)$ we have
$$\sum_x p(x) \exp D_2(\rho_x\| \sigma_x) \geq \exp D_2(\rho \| \sigma),$$
where $\rho = \sum_x p(x)\rho_x$ and $\sigma=\sum_x p(x)\sigma_x$. 
\item \emph{(Monotonicity in $\sigma$)} If $\sigma'\geq \sigma$ then
$D_2(\rho \|\sigma')\leq D_2(\rho\| \sigma).$
\end{itemize}
\label{thm:collision-pr}
\end{theorem}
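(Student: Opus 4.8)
\emph{Proof plan.} The plan is to treat all four items through the single quantity $Q_2(\rho\|\sigma):=\exp D_2(\rho\|\sigma)=\tr\big[(\sigma^{-1/4}\rho\sigma^{-1/4})^2\big]$: establish joint convexity by a variational formula and bootstrap the rest from it. First I would record two trace rewritings, valid whenever $\supp(\rho)\subseteq\supp(\sigma)$ (and with everything $=+\infty$ otherwise):
\[
Q_2(\rho\|\sigma)=\tr\big[\sigma^{-1/2}\rho\,\sigma^{-1/2}\rho\big]=\tr\big[(\rho^{1/2}\sigma^{-1/2}\rho^{1/2})^2\big]=\big\|\rho^{1/2}\sigma^{-1/2}\rho^{1/2}\big\|_2^2,
\]
each an application of cyclicity of the trace. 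Two bookkeeping facts follow at once and will be reused: $Q_2$ is invariant under conjugating both arguments by an isometry $V$ (since $(V\sigma V^\dagger)^{-1/4}=V\sigma^{-1/4}V^\dagger$ on the relevant support and $V^\dagger V=I$), and it is multiplicative on tensor products. For \emph{positivity}, put $A:=\sigma^{-1/4}\rho\sigma^{-1/4}\ge 0$; from $\supp(\rho)\subseteq\supp(\sigma)$ one gets $\tr[\sigma^{1/2}A]=\tr[\rho]=1$, and Cauchy--Schwarz for the Hilbert--Schmidt inner product gives $1=\langle\sigma^{1/2},A\rangle\le\|\sigma^{1/2}\|_2\,\|A\|_2=\sqrt{\tr\sigma}\cdot\sqrt{\tr[A^2]}=\sqrt{Q_2(\rho\|\sigma)}$, hence $D_2(\rho\|\sigma)\ge 0$.

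The heart of the matter is \emph{joint convexity}, and the main tool would be a variational formula. For Hermitian $A$ one has $\tr[A^2]=\max_{\omega=\omega^\dagger}\{2\tr[A\omega]-\tr[\omega^2]\}$ by completing the square, with optimum $\omega=A$. Taking $A=\sigma^{-1/4}\rho\sigma^{-1/4}$, restricting to $\omega$ supported on $\supp(\sigma)$ (harmless, as that $A$ already is), and substituting $\omega=\sigma^{1/4}\tilde\omega\sigma^{1/4}$ turns the cross term into $2\tr[\rho\tilde\omega]$ and the quadratic term into $\tr[\sigma^{1/2}\tilde\omega\sigma^{1/2}\tilde\omega]$, giving
\[
Q_2(\rho\|\sigma)=\sup_{\tilde\omega=\tilde\omega^\dagger}\Big\{\,2\tr[\rho\tilde\omega]-\tr\big[\sigma^{1/2}\tilde\omega\,\sigma^{1/2}\tilde\omega\big]\,\Big\};
\]
one checks this also yields $+\infty$ exactly when $\supp(\rho)\not\subseteq\supp(\sigma)$ (take $\tilde\omega=c\,\ket\psi\bra\psi$ with $\sigma^{1/2}\ket\psi=0$ but $\bra\psi\rho\ket\psi>0$, and let $c\to\infty$), so the identity is unconditional. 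Now for each fixed $\tilde\omega$ the functional $(\rho,\sigma)\mapsto 2\tr[\rho\tilde\omega]-\tr[\sigma^{1/2}\tilde\omega\sigma^{1/2}\tilde\omega]$ is linear in $\rho$ and convex in $\sigma$: indeed $\sigma\mapsto\tr[\sigma^{1/2}\tilde\omega\sigma^{1/2}\tilde\omega]$ is the restriction to the diagonal of the jointly concave map $(\sigma_1,\sigma_2)\mapsto\tr[\tilde\omega\,\sigma_1^{1/2}\,\tilde\omega\,\sigma_2^{1/2}]$, which is Lieb's concavity theorem with exponents $p=q=\tfrac12$. A pointwise supremum of jointly convex functionals is jointly convex, so $Q_2$ is, and unwinding the definition gives $\sum_x p(x)\,Q_2(\rho_x\|\sigma_x)\ge Q_2(\rho\|\sigma)$ with $\rho=\sum_x p(x)\rho_x$, $\sigma=\sum_x p(x)\sigma_x$.

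For \emph{monotonicity in $\sigma$} I would use the third rewriting. If $\sigma'\ge\sigma$, then, after replacing $\sigma,\sigma'$ by $\sigma+\epsilon I,\sigma'+\epsilon I$ to make the inverses genuine and letting $\epsilon\to 0$ at the end (using $\supp(\rho)\subseteq\supp(\sigma)\subseteq\supp(\sigma')$), anti-monotonicity of $t\mapsto t^{-1}$ and monotonicity of $t\mapsto t^{1/2}$ give $(\sigma')^{-1/2}\le\sigma^{-1/2}$, hence $0\le\rho^{1/2}(\sigma')^{-1/2}\rho^{1/2}\le\rho^{1/2}\sigma^{-1/2}\rho^{1/2}$; since $0\le P\le Q$ implies $\tr[P^2]\le\tr[Q^2]$, this yields $Q_2(\rho\|\sigma')\le Q_2(\rho\|\sigma)$. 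Finally, \emph{data processing} reduces to the earlier parts via Stinespring: write $\mathcal N(\cdot)=\tr_E[V(\cdot)V^\dagger]$ for an isometry $V$. Isometric invariance removes $V$, so it suffices to show $D_2$ decreases under $\tr_E$. Choosing a unitary $1$-design $\{U_j\}_{j=1}^{d^2}$ on $E$, one has $\tfrac1{d^2}\sum_j (I\otimes U_j)\,\tau_{BE}\,(I\otimes U_j^\dagger)=\tr_E[\tau_{BE}]\otimes\tfrac{I_E}{d}$; applying joint convexity of $Q_2$ to this uniform mixture, then unitary invariance of each summand together with tensor multiplicativity on the left (with $Q_2(\tfrac{I_E}{d}\|\tfrac{I_E}{d})=1$), yields $Q_2(\tr_E\tau\|\tr_E\xi)\le Q_2(\tau\|\xi)$.

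The single genuinely nontrivial input is Lieb's concavity theorem invoked in the joint-convexity step; that is where the ``hard part'' sits if one wants a self-contained account, since it is itself a substantial theorem. Everything else is trace gymnastics, Cauchy--Schwarz, standard operator-monotonicity facts, and the Stinespring/$1$-design reduction. The only place demanding care rather than difficulty is the support bookkeeping: the meaning of $\sigma^{-1/4}$ as a generalized inverse, the claim that the variational supremum correctly produces $+\infty$ off the support condition, and the $\epsilon$-regularization needed to justify $(\sigma')^{-1/2}\le\sigma^{-1/2}$ when $\sigma$ is not invertible.
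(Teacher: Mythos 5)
Your proposal is correct. Note, however, that the paper itself offers no proof of this theorem: it is quoted from the cited works on sandwiched R\'enyi divergences, where these properties are established for the whole family $D_\alpha$ (typically via complex interpolation or operator-convexity machinery, and where data processing for general $\alpha$ is the hard result of Frank--Lieb and Beigi). What you give is a self-contained specialization to $\alpha=2$, and the route is sound: the variational identity $\exp D_2(\rho\|\sigma)=\sup_{\tilde\omega}\{2\tr[\rho\tilde\omega]-\tr[\sigma^{1/2}\tilde\omega\sigma^{1/2}\tilde\omega]\}$ is correctly derived (including the check that it returns $+\infty$ off the support condition), each candidate functional is affine in $\rho$ and, by Lieb's concavity theorem with $p=q=\tfrac12$ restricted to the diagonal, convex in $\sigma$, so the supremum is jointly convex; the Stinespring-plus-$1$-design reduction then correctly bootstraps data processing from joint convexity, isometric invariance, tensor multiplicativity and $\exp D_2(\tfrac{I}{d}\|\tfrac{I}{d})=1$; the Cauchy--Schwarz argument for positivity and the operator-monotonicity argument ($\sigma'\geq\sigma\Rightarrow(\sigma')^{-1/2}\leq\sigma^{-1/2}$, then $0\leq P\leq Q\Rightarrow\tr[P^2]\leq\tr[PQ]\leq\tr[Q^2]$, with the $\epsilon$-regularization for singular $\sigma$) are both fine. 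Two cosmetic remarks: positivity also follows in one line from your data-processing step by taking $\mathcal N=\tr$, and monotonicity in $\sigma$ drops out of the same variational formula (a larger $\sigma$ only increases the quadratic penalty term, again by taking the $\epsilon$-regularized diagonal of Lieb), so the whole theorem can be hung on the single variational identity if you prefer; and you correctly flag Lieb's theorem as the one genuinely nontrivial external input.
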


The other important quantity that we use in this paper, is the \emph{information spectrum relative entropy} introduced by Tomamichel and Hayashi~\cite{TomamichelHayashi}. We need the following notation to define this relative entropy.

For a hermitian matrix $U$ let $\Pi_{U\geq 0}$ be the \emph{orthogonal} projection onto the union of eigenspaces of $U$ with non-negative eigenvalues. Also for two hermitian operators $U, V$ let $\Pi_{U\geq V} = \Pi_{U-V\geq 0}.$
Now for  density matrix $\rho$ and positive semidefinite $\sigma$  and $\epsilon>0$, define the information spectrum relative entropy by  
$$D_s^\epsilon(\rho\| \sigma) := \sup\big\{ R | \, \tr\big(\rho\Pi_{\rho\leq 2^R \sigma}\big)\leq\epsilon\big\}.$$

Information spectrum relative entropy is used in~\cite{TomamichelHayashi} to compute the second order asymptotics of some information processing tasks including quantum hypothesis testing and source coding with quantum side information. The reason for introducing $D_s^\epsilon(\cdot \| \cdot)$ is that its second order asymptotics can be computed in terms of relative entropy and relative entropy variance. To express this result we first need the following definition:

\begin{definition}{\rm{\cite{TomamichelHayashi,Li}}} \label{def:infvar} The  information variance or the relative entropy variance
 $V(\rho\|\sigma)$ is defined by
$$V(\rho\| \sigma) := \tr\big(\rho(\log \rho - \log \sigma - D(\rho\|\sigma))^2\big).$$
Here $D(\rho \| \sigma)$ is the  relative entropy $D(\rho \| \sigma) := \tr(\rho(\log \rho - \log \sigma)).$
\end{definition}

\begin{theorem} {\rm \cite{TomamichelHayashi}} \label{thm:Ds-asymptotics}
For every two density matrices $\rho, \sigma$, fixed $0< \epsilon<1$ and $\delta=O(1/\sqrt n)$ we have
$$D_s^{\epsilon\pm \delta} (\rho^{\otimes n}\| \sigma^{\otimes n}) = nD(\rho\| \sigma) + \sqrt{n V(\rho\| \sigma)} \Phi^{-1}(\epsilon) +  O(\log n),$$
where $\Phi$ is the cumulative distribution of the standard normal distribution
$\Phi(u) :=\int_{-\infty}^u \frac{1}{\sqrt{2\pi}} e^{-\frac{1}{2}t^2 }{\rm{d}} t,$ and $\Phi^{-1}(\epsilon)= \sup\{u|\, \Phi(u)\leq \epsilon\}$. 
\end{theorem}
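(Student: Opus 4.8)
\medskip
\noindent\textbf{Proof strategy.}
The plan is to transfer the problem to a classical one and then apply the classical Berry--Esseen theorem. Fix spectral decompositions $\rho=\sum_i r_i\ket{e_i}\bra{e_i}$ and $\sigma=\sum_j s_j\ket{f_j}\bra{f_j}$ and introduce the Nussbaum--Szko\l{}a distributions $P(i,j)=r_i\,|\langle e_i|f_j\rangle|^2$ and $Q(i,j)=s_j\,|\langle e_i|f_j\rangle|^2$ on pairs of indices $(i,j)$. First I would verify the two elementary facts that make this substitution work: $D(P\|Q)=D(\rho\|\sigma)$ and $V(P\|Q)=V(\rho\|\sigma)$ (both are short calculations using $\sum_j|\langle e_i|f_j\rangle|^2=1$), and that the Nussbaum--Szko\l{}a distributions of the tensor powers $\rho^{\otimes n},\sigma^{\otimes n}$ are exactly the products $P^{\otimes n},Q^{\otimes n}$.

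The key step would be a two-sided comparison, valid for all density matrices with $\supp(\rho)\subseteq\supp(\sigma)$ and all $\lambda\in\mathbb R$, sandwiching the quantum quantity $\tr(\rho\,\Pi_{\rho\le 2^{\lambda}\sigma})$ between classical probabilities of the form $\Pr\big[\log\tfrac{P(\mathtt i,\mathtt j)}{Q(\mathtt i,\mathtt j)}\le\lambda\pm c\big]$, where $(\mathtt i,\mathtt j)\sim P$, up to an absolute multiplicative constant (a shift $c=O(1)$ and a factor $2$ suffice). Granting this and using the product structure above, the definition of $D_s^\epsilon$ becomes the statement that $D_s^\epsilon(\rho^{\otimes n}\|\sigma^{\otimes n})$ equals, up to an additive $O(1)$, the quantile $\sup\{R:\Pr[Z_n\le R]\le\epsilon\}$ of the sum $Z_n:=\sum_{k=1}^n\log\frac{P(\mathtt i_k,\mathtt j_k)}{Q(\mathtt i_k,\mathtt j_k)}$ of $n$ i.i.d.\ random variables with mean $D(\rho\|\sigma)$ and variance $V(\rho\|\sigma)$; the $O(1)$ slack in the exponent is exactly what the perturbation $\epsilon\pm\delta$ with $\delta\propto 1/\sqrt n$ is there to absorb.

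To conclude I would apply the Berry--Esseen theorem to $Z_n$ --- whose summands are bounded and hence have finite third moment --- to get $\big|\Pr[Z_n\le nD(\rho\|\sigma)+\sqrt{nV(\rho\|\sigma)}\,u]-\Phi(u)\big|\le C/\sqrt n$ uniformly in $u$. Inverting this estimate for the quantile, and using that $\Phi^{-1}$ is Lipschitz in a neighbourhood of the fixed value $\epsilon\in(0,1)$, the $C/\sqrt n$ error of the CDF together with the $O(1)$ exponent shift contribute only $O(1)$ to $D_s^\epsilon$; combining this with the monotonicity $D_s^{\epsilon-\delta}\le D_s^\epsilon\le D_s^{\epsilon+\delta}$ and $\Phi^{-1}(\epsilon\pm\delta)=\Phi^{-1}(\epsilon)+O(1/\sqrt n)$ gives $D_s^{\epsilon\pm\delta}(\rho^{\otimes n}\|\sigma^{\otimes n})=nD(\rho\|\sigma)+\sqrt{nV(\rho\|\sigma)}\,\Phi^{-1}(\epsilon)+O(1)$.

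I expect the main obstacle to be the comparison inequality of the second step: this is the only place where the non-commutativity of $\rho^{\otimes n}$ and $\sigma^{\otimes n}$ enters, and it must be shown to cost only an $O(1)$ additive shift in $\lambda$ rather than the $O(\log n)$ that a crude pinching of $\rho^{\otimes n}$ in the eigenbasis of $\sigma^{\otimes n}$ would incur (the pinching inequality loses a factor equal to the number $\mathrm{poly}(n)$ of distinct eigenvalues of $\sigma^{\otimes n}$). Obtaining the sharp $O(1)$ is precisely what forces the argument through the Nussbaum--Szko\l{}a construction, exploiting identities special to it such as $\tr(\rho^{1-s}\sigma^{s})=\sum_{i,j}P(i,j)^{1-s}Q(i,j)^{s}$.
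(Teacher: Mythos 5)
You should first note that the paper does not prove Theorem~\ref{thm:Ds-asymptotics} at all: it is quoted from Tomamichel--Hayashi \cite{TomamichelHayashi} and used as a black box, so there is no in-paper proof to compare against. Your overall route (Nussbaum--Szko\l{}a distributions $P,Q$, the identities $D(P\|Q)=D(\rho\|\sigma)$ and $V(P\|Q)=V(\rho\|\sigma)$, tensorization, then Berry--Esseen) is indeed the strategy used in the cited literature, and your step~1 computations are correct.

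The central step, however, contains a genuine gap --- and not merely because the comparison inequality is left unproved. You propose to sandwich $\tr\big(\rho\,\Pi_{\rho\le 2^{\lambda}\sigma}\big)$ between $\Pr\big[\log\frac{P}{Q}\le\lambda\pm c\big]$ ``up to a factor $2$ and a shift $c=O(1)$,'' and assert that the $\epsilon\pm\delta$ slack absorbs this. It cannot: $\delta\propto 1/\sqrt n$ absorbs only \emph{additive} changes of order $1/\sqrt n$ in the probability level, whereas a multiplicative factor $2$ changes the level from $\epsilon$ to $\epsilon/2$ (resp.\ $2\epsilon$), which moves the corresponding quantile of $Z_n$ by $\Theta(\sqrt n)$ and would replace $\Phi^{-1}(\epsilon)$ by $\Phi^{-1}(\epsilon/2)$ (resp.\ $\Phi^{-1}(2\epsilon)$) in the second-order term. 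So even granting your sandwich, the argument recovers only the first-order (Stein) term, not the stated expansion; both the lower and the upper estimate on $D_s^{\epsilon\pm\delta}$ get corrupted in the same way. What is actually needed --- and what constitutes the real content of the cited result --- is a comparison with purely additive errors, e.g.\ of the form $\Pr\big[\log\frac{P}{Q}\le\lambda-\nu\big]-2^{-\Omega(\nu)}\le\tr\big(\rho\,\Pi_{\rho\le 2^{\lambda}\sigma}\big)\le\Pr\big[\log\frac{P}{Q}\le\lambda+\nu\big]+2^{-\Omega(\nu)}$, where the additive error can be driven below $O(1/\sqrt n)$ (and then absorbed by $\pm\delta$) at the price of a threshold shift $\nu$; this is how Tomamichel--Hayashi, and Li for hypothesis testing, proceed. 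Your sketch neither states nor proves an inequality of this quantitative strength: the pinching bound you rightly dismiss, and the moment identity $\tr(\rho^{1-s}\sigma^{s})=\sum_{i,j}P(i,j)^{1-s}Q(i,j)^{s}$ you invoke, do not by themselves yield it, because converting moment identities into CDF comparisons costs exactly the multiplicative constants that are fatal here. A smaller conceptual slip in the same spirit: the $\pm\delta$ perturbation is there to absorb $O(1/\sqrt n)$ additive probability errors (the Berry--Esseen term and the comparison's additive term), while $O(1)$ threshold shifts are absorbed by the $O(1)$ remainder --- not by $\pm\delta$.
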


 Since this result is not explicitly proved in~\cite{TomamichelHayashi}, here for the sake of completeness we present a proof.

\begin{proof} 
Equation (34) of~\cite{TomamichelHayashi} states that 
$$D_h^{\epsilon}(\rho^{\otimes n}\| \sigma^{\otimes n})=nD(\rho\| \sigma) + \sqrt{nV(\rho\| \sigma)}\Phi^{-1}(\epsilon) +O(\log n),$$
where $D_h^{\epsilon}(\cdot\| \cdot)$ denotes the \emph{hypothesis testing} relative entropy (see~\cite{TomamichelHayashi} for its definition).
This equation is proved in~\cite{TomamichelHayashi} using equation (33) and the inequality in the beginning of Section VI A of~\cite{TomamichelHayashi}. It is not hard to see that by the same argument for every $\epsilon>0$ and $\delta=O(1/\sqrt n)$ we have 
$$D_h^{\epsilon\pm \delta}(\rho^{\otimes n}\| \sigma^{\otimes n})=nD(\rho\| \sigma) + \sqrt{nV(\rho\| \sigma)}\Phi^{-1}(\epsilon) +O(\log n).$$
On the other hand, from Lemma 12 of~\cite{TomamichelHayashi} for $0< \delta'<\epsilon$ we have
$$D^{\epsilon-\delta'}_h(\rho^{\otimes n}\|\sigma^{\otimes n})+\log \delta' \leq 
D_s^{\epsilon}(\rho^{\otimes n}\| \sigma^{\otimes n}) 
\leq D_h^{\epsilon}(\rho^{\otimes n}\| \sigma^{\otimes n}).$$
The proof is then completed by putting the above two equations together
\end{proof}

In this paper we derive our achievability bounds in terms of collision relative entropy. In the following theorem we prove a lower bound on the collision relative entropy in terms of information spectrum relative entropy.  We will use this lower bound and Theorem~\ref{thm:Ds-asymptotics} to compute the asymptotics of our achievable bounds.

\begin{theorem} \label{thm:collision-spectrum}
For every $0< \epsilon, \lambda <1$, density matrix $\rho$ and positive semidefinite $\sigma$ we have
\begin{align*}\exp D_2&(\rho\| \lambda \rho  + (1-\lambda)\sigma) \geq
\\& (1-\epsilon)\left[\lambda + (1-\lambda) \exp\big(-D_s^{\epsilon}(\rho \| \sigma) \big)\right]^{-1}.\end{align*}
\end{theorem}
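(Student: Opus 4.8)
The plan is to reduce, by data processing for $D_2$, to the single spectral block on which $\rho$ dominates $2^R\sigma$, where the estimate becomes elementary. Put $\bar\sigma:=\lambda\rho+(1-\lambda)\sigma$. Fix any $R$ with $\tr\big(\rho\,\Pi_{\rho\le 2^R\sigma}\big)\le\epsilon$ (if there is no such $R$ the right-hand side of the theorem is $0$ and nothing is to prove), and set $\Pi:=\Pi_{\rho\le 2^R\sigma}$, $Q:=I-\Pi$. Since $Q$ is the spectral projector of $\rho-2^R\sigma$ onto its strictly positive eigenspaces, it commutes with $\rho-2^R\sigma$ and $Q(\rho-2^R\sigma)Q\ge 0$, i.e.
$$Q\sigma Q\ \le\ 2^{-R}\,Q\rho Q ;$$
moreover $\tr(\rho Q)=1-\tr(\rho\Pi)\ge 1-\epsilon$.

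Next I would apply the data-processing inequality of Theorem~\ref{thm:collision-pr} to the pinching channel $\mathcal{M}(X):=QXQ+\Pi X\Pi$, obtaining $D_2(\rho\|\bar\sigma)\ge D_2\big(\mathcal{M}(\rho)\,\|\,\mathcal{M}(\bar\sigma)\big)$. Both arguments are block-diagonal with respect to the two orthogonal subspaces onto which $Q$ and $\Pi$ project, and, straight from \eqref{eq:D2}, $\exp D_2$ is additive over such a direct sum. Dropping the nonnegative $\Pi$-block gives
$$\exp D_2(\rho\|\bar\sigma)\ \ge\ \exp D_2\big(Q\rho Q\,\|\,Q\bar\sigma Q\big).$$

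It remains to bound the surviving block. From $Q\sigma Q\le 2^{-R}Q\rho Q$ we get $Q\bar\sigma Q=\lambda\,Q\rho Q+(1-\lambda)Q\sigma Q\le\big(\lambda+(1-\lambda)2^{-R}\big)\,Q\rho Q$, so monotonicity of $D_2$ in the second argument yields $D_2(Q\rho Q\|Q\bar\sigma Q)\ge D_2\big(Q\rho Q\,\big\|\,(\lambda+(1-\lambda)2^{-R})\,Q\rho Q\big)$. For positive semidefinite $\tau$ and $c>0$ one computes directly from \eqref{eq:D2} that $\exp D_2(\tau\|c\tau)=c^{-1}\tr(\tau)$; applying this with $\tau=Q\rho Q$ and $c=\lambda+(1-\lambda)2^{-R}$ gives $\exp D_2(\rho\|\bar\sigma)\ge(1-\epsilon)\big[\lambda+(1-\lambda)2^{-R}\big]^{-1}$. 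Since this holds for every admissible $R$, and the right-hand side is continuous and increasing in $R$ while the admissible $R$ have supremum $D_s^\epsilon(\rho\|\sigma)$, taking the supremum over $R$ finishes the proof.

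I do not expect a serious obstacle: the argument is essentially forced once one decides to condition on the event $\{\rho>2^R\sigma\}$ via a pinching. The points requiring care are (i) the spectral-calculus step $Q\sigma Q\le 2^{-R}Q\rho Q$, which uses that $Q$ is exactly a spectral projector of $\rho-2^R\sigma$ (and extracts only the inequality $\ge 0$, not strict positivity); (ii) the two elementary identities for $\exp D_2$ (additivity over direct sums and $\exp D_2(\tau\|c\tau)=c^{-1}\tr\tau$), which follow immediately from the definition; and (iii) the fact that the supremum defining $D_s^\epsilon$ need not be attained, which is why one establishes the bound for each admissible $R$ and only passes to $R\to D_s^\epsilon(\rho\|\sigma)$ at the end.
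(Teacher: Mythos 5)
Your proposal is correct and follows essentially the same route as the paper: pinching with respect to $\{\Pi_{\rho\le 2^R\sigma}, I-\Pi_{\rho\le 2^R\sigma}\}$, data processing, dropping the orthogonal block, monotonicity of $D_2$ in the second argument via $Q(\rho-2^R\sigma)Q\ge 0$, and the scalar identity $\exp D_2(\tau\|c\tau)=c^{-1}\tr\tau$. The only (minor) difference is that you prove the bound for every admissible $R$ and pass to the supremum at the end, whereas the paper plugs in $R=D_s^{\epsilon}(\rho\|\sigma)$ directly; your variant is slightly more careful about the supremum possibly not being attained, but it is not a different proof.
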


\begin{proof} Let $\Pi = \Pi_{\rho\leq 2^R \sigma}$ and $\Pi' = I-\Pi = \Pi_{\rho> 2^R \sigma}$ where $R$ is a real number to be determined. Define the pinching map
$\mathcal N (\rho) := \Pi \rho \Pi  + \Pi'\rho \Pi'.$
Then the following chain of inequalities hold.
\begin{align}
\exp &D_{2}(\rho\| \lambda \rho + (1-\lambda)\sigma)  \nonumber
\\&\geq \exp D_2(\mathcal N(\rho) \| \lambda \mathcal N(\rho) + (1-\lambda)\mathcal N(\sigma))\label{eqn:aaaab1} \\
& \geq \exp D_2(\Pi'\rho \Pi' \| \lambda \Pi'\rho \Pi' + (1-\lambda)\Pi'\sigma \Pi')\label{eqn:aaaab2}\\
& \geq \exp D_2(\Pi'\rho \Pi' \| \lambda \Pi'\rho \Pi' + (1-\lambda)2^{-R}\Pi'\rho \Pi')\label{eqn:aaaab3}\\
& = \left(  \lambda + (1-\lambda)2^{-R}  \right)^{-1} \tr(\Pi' \rho).\nonumber
\end{align}
Here~\eqref{eqn:aaaab1} comes from the data processing inequality for collision relative entropy. For~\eqref{eqn:aaaab2} observe that since $\Pi$ and $\Pi'$ are orthogonal to each other, $\exp D_2(\mathcal N(\rho)  \| \lambda \mathcal N(\rho) + (1-\lambda)\mathcal N(\sigma))$ is a summation of two non-negative terms. For~\eqref{eqn:aaaab3} we use the last property of collision relative entropy stated in Theorem~\ref{thm:collision-pr} and that by the definition of $\Pi'$ we have $\Pi'(\rho - 2^R\sigma)\Pi'\geq 0$. Now the result follows by letting $R=D_s^{\epsilon}(\rho\| \sigma)$ and using the fact that by the definition of information spectrum relative entropy we have $\tr(\Pi'\rho)\geq 1-\epsilon$.
\end{proof}

\section{A one-shot achievable bound for c-q channels}\label{sec:sectionII}
Consider a c-q channel with input $X$ and output $B$ which maps $x\in \mathcal X$ to $\rho_x$. Fix an arbitrary distribution
$p_{X}(x)$ on the input. This distribution induces a density matrix over the joint system $XB$:
\begin{align}\label{eq:rho-x-b-1}
\rho_{XB} = \sum_{x\in \mathcal X} p_X(x)\ket x\bra x\otimes \rho_x.
\end{align}
As usual, to communicate a classical message $m\in \{1, \dots, M\}$ over the channel, we encode the message $m$ with $\tx_m$. In other words, we generate the random codebook $\mc=\{\tx_1,\cdots,\tx_M\}$ where the elements $\tx_m$ are drawn independently from $p_{X}(x)$. So if the message to be communicated is $m$, the output of the channel is $\rho_{\tx_m}$. As mentioned above, Yassaee et al.~\cite{Yassaee} use SLC to decode the message. Putting in the terminology of quantum information theory and applying it to c-q channels, this decoder is nothing but a pretty good measurement. So we assume that to decode the message the receiver applies pretty good measurement corresponding to signal states $\{\rho_{\tx_1}, \dots, \rho_{\tx_M}\}$ with POVM elements
\begin{align}
E_m&=\Big(\sum_{i}\rho_{\tx_{i}}\Big)^{-\frac{1}{2}}\rho_{\tx_m}\Big(\sum_{i}\rho_{\tx_{i}}\Big)^{-\frac{1}{2}}
\label{eq:POVMeq}.
\end{align}
There is nothing new in quantum information theory up to this point. The error analysis of this encoder/decoder however contains the main idea of this work, and as in~\cite{Yassaee} is based on Jensen's inequality. In the following we find that the probability of successful decoding can be written in term of collision relative entropy and then we use joint convexity. 

\begin{theorem}\label{thm:oshot1}
The \emph{expected value} of the probability of successful decoding of the pretty good measurement~\eqref{eq:POVMeq} corresponding to a randomly generated codebook according to distribution $p_X$ is bounded by
\begin{align*}
\e_{\mc} \Pr[\text{\rm{succ}}]
&\geq M^{-1} \exp D_2\left(\rho_{XB} \big\|\, \frac{1}{M}  \rho_{XB} + \big(1-\frac{1}{M}\big) \rho_X\otimes \rho_B\right),
\end{align*}
where $\rho_{XB}$ is defined in~\eqref{eq:rho-x-b-1}.
\end{theorem}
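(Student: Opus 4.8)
The plan is to rewrite the codebook-dependent success probability of the pretty good measurement as an average of exponentials of collision relative entropies, and then invoke the joint convexity of $\exp D_2$ from Theorem~\ref{thm:collision-pr}, applied one codeword at a time.

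First I would compute $\Pr[\mathrm{succ}\mid\mc]$ explicitly. Averaging over the uniformly chosen message, $\Pr[\mathrm{succ}\mid\mc]=\frac1M\sum_m\tr(E_m\rho_{\tx_m})$ (this holds whether or not we augment the POVM with a no-detection element $I-\sum_m E_m$). Writing $S:=\sum_i\rho_{\tx_i}$ and plugging in~\eqref{eq:POVMeq} gives $\Pr[\mathrm{succ}\mid\mc]=\frac1M\sum_m\tr\big(S^{-\frac12}\rho_{\tx_m}S^{-\frac12}\rho_{\tx_m}\big)$. The key algebraic observation is that $\exp D_2(\rho\|\sigma)=\tr\big((\sigma^{-\frac14}\rho\sigma^{-\frac14})^2\big)=\tr\big(\sigma^{-\frac12}\rho\sigma^{-\frac12}\rho\big)$, so each summand is exactly $\exp D_2(\rho_{\tx_m}\|S)$; since $S\geq\rho_{\tx_m}$ we have $\supp(\rho_{\tx_m})\subseteq\supp(S)$, so these are finite. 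Hence $\Pr[\mathrm{succ}\mid\mc]=\frac1M\sum_m\exp D_2\big(\rho_{\tx_m}\,\big\|\,\rho_{\tx_m}+\sum_{i\neq m}\rho_{\tx_i}\big)$, and because the codebook is i.i.d.\ (hence exchangeable) and this expression is symmetric in the roles of the codewords, taking expectations reduces the claim to lower bounding $\e_{\mc}\exp D_2\big(\rho_{\tx_1}\,\big\|\,\rho_{\tx_1}+\sum_{i=2}^M\rho_{\tx_i}\big)$.

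Next I would condition on the value $\tx_1=x$ of the first codeword, freezing the first slot $\rho_x$ of the collision divergence while $\tx_2,\dots,\tx_M$ stay random. Applying joint convexity of $\exp D_2(\cdot\|\cdot)$ over the randomness of the tuple $(\tx_2,\dots,\tx_M)$, with constant first argument $\rho_x$ and second argument $\rho_x+\sum_{i=2}^M\rho_{\tx_i}$, gives $\e_{\tx_2,\dots,\tx_M}\exp D_2\big(\rho_x\,\big\|\,\rho_x+\sum_{i\geq2}\rho_{\tx_i}\big)\geq\exp D_2\big(\rho_x\,\big\|\,\rho_x+(M-1)\rho_B\big)$, using $\e[\rho_{\tx_i}]=\rho_B$. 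Averaging over $\tx_1=x\sim p_X$ then yields $\e_{\mc}\Pr[\mathrm{succ}]\geq\frac1M\,\e_{\tx}\exp D_2\big(\rho_{\tx}\,\big\|\,\rho_{\tx}+(M-1)\rho_B\big)$.

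Finally I would repackage the right-hand side as a single collision divergence on the c-q system $XB$. Using the scaling identity $\exp D_2(\rho\|c\sigma)=c^{-1}\exp D_2(\rho\|\sigma)$, the block-diagonal identity $\exp D_2\big(\sum_x p_X(x)\ket x\bra x\otimes\rho_x\,\big\|\,\sum_x p_X(x)\ket x\bra x\otimes\sigma_x\big)=\sum_x p_X(x)\exp D_2(\rho_x\|\sigma_x)$, and $\frac1M\rho_{XB}+(1-\frac1M)\rho_X\otimes\rho_B=\sum_x p_X(x)\ket x\bra x\otimes\frac1M\big(\rho_x+(M-1)\rho_B\big)$, one checks that $\e_{\tx}\exp D_2\big(\rho_{\tx}\,\big\|\,\rho_{\tx}+(M-1)\rho_B\big)=\exp D_2\big(\rho_{XB}\,\big\|\,\frac1M\rho_{XB}+(1-\frac1M)\rho_X\otimes\rho_B\big)$, which completes the proof. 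The main obstacle, and the real content of the argument, is producing $\rho_X\otimes\rho_B$ rather than merely $\rho_B$ in the second slot: this forces one to apply joint convexity only to the $M-1$ background codewords \emph{after} conditioning on $\tx_1$; averaging over $\tx_1$ as well would collapse the bound to the trivial $M^{-1}$.
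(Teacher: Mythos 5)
Your core observation is the same as the paper's: the success probability of the pretty good measurement is an exponential of collision relative entropy ($\tr(E_m\rho_{\tx_m})=\exp D_2(\rho_{\tx_m}\|\sum_i\rho_{\tx_i})$), and the bound then follows from joint convexity of $\exp D_2$. The organization differs, though. The paper keeps the classical registers throughout: it writes the whole codebook-dependent success probability as $\frac1M\exp D_2(\sigma_{UXB}\|\sigma_{UX}\otimes\sigma_B)$ for the empirical state $\sigma_{UXB}=\frac1M\sum_m\ket m\bra m\otimes\ket{\tx_m}\bra{\tx_m}\otimes\rho_{\tx_m}$, removes $U$ by data processing, and applies joint convexity in one shot over the entire codebook, including the sent codeword; the term $\frac1M\rho_{XB}$ then comes from the $m=m'$ diagonal of $\e_{\mc}[\sigma_X\otimes\sigma_B]$. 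You instead trace out the classical label, use exchangeability to isolate one codeword, condition on $\tx_1$, apply Jensen only over the background codewords, and reassemble the $X$ register at the end via the block-diagonal identity. Both routes are valid and use the same two key ingredients; note only that your closing remark is accurate for your traced-out formulation (Jensen over everything would indeed give the trivial $M^{-1}$ there), whereas the paper does average over the sent codeword as well and avoids the collapse by retaining the register $X$, not by refraining from that averaging.

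One bookkeeping issue must be fixed, because two of your displayed statements are each off by a factor of $M$, in compensating directions. Exchangeability gives $\e_{\mc}\Pr[\text{\rm{succ}}]=\e_{\mc}\exp D_2\big(\rho_{\tx_1}\,\big\|\,\rho_{\tx_1}+\sum_{i\geq2}\rho_{\tx_i}\big)$ with no residual $\frac1M$ (the $M$ identical terms cancel it), so after conditioning on $\tx_1$ and applying Jensen you obtain $\e_{\mc}\Pr[\text{\rm{succ}}]\geq\e_{\tx}\exp D_2\big(\rho_{\tx}\,\big\|\,\rho_{\tx}+(M-1)\rho_B\big)$, not the weaker bound with an extra prefactor $\frac1M$ that you wrote. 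Correspondingly, your final identity is false as stated: combining your own block-diagonal and scaling identities gives $\exp D_2\big(\rho_{XB}\,\big\|\,\frac1M\rho_{XB}+(1-\frac1M)\rho_X\otimes\rho_B\big)=\sum_x p_X(x)\,M\exp D_2\big(\rho_x\,\big\|\,\rho_x+(M-1)\rho_B\big)=M\,\e_{\tx}\exp D_2\big(\rho_{\tx}\,\big\|\,\rho_{\tx}+(M-1)\rho_B\big)$; for instance with $M=2$, $p_X$ uniform on two symbols and $\rho_0,\rho_1$ orthogonal pure states, the left side is $4/3$ while the expectation on the right of your identity is $2/3$. The two slips cancel, so your last line is the theorem, and with the constants corrected the chain reads $\e_{\mc}\Pr[\text{\rm{succ}}]\geq\e_{\tx}\exp D_2\big(\rho_{\tx}\,\big\|\,\rho_{\tx}+(M-1)\rho_B\big)=M^{-1}\exp D_2\big(\rho_{XB}\,\big\|\,\frac1M\rho_{XB}+(1-\frac1M)\rho_X\otimes\rho_B\big)$, which is exactly the claimed bound.
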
 

For a classical channel, the above bound reduces to Theorem 1 of \cite{Yassaee}, although it is not expressed in terms of collision relative entropy in \cite{Yassaee}.

\begin{remark}
One might guess that replacing $ \frac{1}{M}  \rho_{XB} + \big(1-\frac{1}{M}\big) \rho_X\otimes \rho_B$ by $ \rho_X\otimes \rho_B$ may also result in a valid lower bound. This is incorrect since this would imply that the rate  $D_2(\rho_{XB}\|\rho_X\otimes \rho_B)\geq D(\rho_{XB}\|\rho_X\otimes \rho_B)=I(X;B)_{\rho}$  is asymptotically achievable with zero probability of error, which we know is not the case. Indeed we have examples of channels with integer capacity, whose capacity cannot be achieved with success probability of one. The above inequality between collision relative entropy and relative entropy is proved in~\cite{Lennertetal, Wildeetal, Beigi2}.
\end{remark}

\begin{proof}
Let us define 
$$\sigma_{UXB} = \frac{1}{M}\sum_{m=1}^M \ket m\bra m\otimes \ket{\tx_m} \bra{\tx_m}\otimes \rho_{\tx_m},$$
and let $\sigma_{UX}$ and $\sigma_B$ be its corresponding marginal density matrices. Note that $\sigma_{UXB}$ is a random density matrix. A direct computation verifies that the probability of successful decoding is equal to
\begin{align*} \Pr[\text{\rm{succ}}]& =\frac 1 M\sum_{m=1}^M \tr\left(E_m \rho_{\tx_m} \right)\\
&= \frac{1}{M} \tr \left[ \sum_{m=1}^M \left(    \Big(\sum_i \rho_{\tx_i}\Big)^{-1/4} \rho_{\tx_m} \Big(\sum_i \rho_{\tx_i}\Big)^{-1/4}   \right)^2  \right]\\
& = \frac{1}{M} \exp D_2( \sigma_{UXB} \| \sigma_{UX} \otimes \sigma_{B}).
\end{align*}
Here the last step is our key observation; The success probability of pretty good measurement (transpose channel in general) can be written in terms of collision relative entropy.

Now using the data processing inequality and joint convexity properties of Theorem~\ref{thm:collision-pr},
the expected value of the probability of correct decoding with respect to randomly chosen codewords, is lower bounded using Jensen's inequality as follows:
\begin{align*}
\e_{\mc} \Pr[\text{\rm{succ}}]& = \frac{1}{M} \e_{\mc} \exp D_2( \sigma_{UXB} \| \sigma_{UX} \otimes \sigma_{B})\\
&\geq \frac{1}{M} \e_{\mc} \exp D_2(\sigma_{XB} \| \sigma_X\otimes \sigma_B)\\
& \geq \frac{1}{M} \exp D_2 \big(  \e_{\mc}\, \sigma_{XB} \| \e_{\mc}\, \sigma_X\otimes \sigma_B  \big).
\end{align*}
Computing $\e_{\mc}\, \sigma_{XB}$ and $\e_{\mc}\, \sigma_X\otimes \sigma_B$ gives the desired result:
\begin{align*}
\e_{\mc}\, \sigma_{XB}  &= \e_{\mc} \frac{1}{M}\sum_m  \ket{\tx_m}\bra{\tx_m}\otimes \rho_{\tx_m}  
\\&= \e_{\mc}\, \ket{\tx_1}\bra{\tx_1} \otimes \rho_{\tx_1}  \\&= \rho_{XB}.\\
\e_{\mc}\, \sigma_X\otimes \sigma_B & = \e_{\mc}\, \frac{1}{M^2} \sum_{m, m'} \ket{\tx_m}\bra{\tx_{m}} \otimes \rho_{\tx_{m'}}  \\
& =  \e_{\mc} \frac{1}{M^2}\sum_m \ket{\tx_m}\bra{\tx_m}\otimes \rho_{\tx_m}\\&\qquad +  \e_{\mc} \frac{1}{M^2} \sum_{m\neq m'}  \ket{\tx_m}\bra{\tx_{m}} \otimes \rho_{\tx_{m'}} \\
& = \frac{1}{M} \rho_{XB} + \big(1-\frac{1}{M}\big) \rho_X\otimes \rho_B.\qedhere 
\end{align*}
\end{proof}

Combining the above theorem with Theorem~\ref{thm:collision-spectrum} we obtain the following one-shot lower bound:
\begin{corollary}  \label{corol:os2} For every distribution $p_X$ and $\epsilon>\delta>0$, it is possible to transmit one out of $M$ messages using a single use of the c-q channel, with the average probability of error being at most $\epsilon$, provided that
$$M= \left\lfloor \frac{\epsilon - \delta}{1-\epsilon}\, \exp\big(D_s^{\delta}(\rho_{XB}\| \rho_{X}\otimes \rho_B)\big) +1\right\rfloor.$$

\end{corollary}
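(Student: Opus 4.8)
The plan is to combine the two previous results --- Theorem~\ref{thm:oshot1}, which gives a lower bound on $\e_{\mc}\Pr[\text{\rm{succ}}]$ in terms of a collision relative entropy, and Theorem~\ref{thm:collision-spectrum}, which lower-bounds that collision relative entropy in terms of $D_s^\epsilon$ --- and then conclude by a standard ``there exists a good codebook'' argument together with expurgation.

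First I would apply Theorem~\ref{thm:oshot1} with the stated $p_X$ to get
$$\e_{\mc}\Pr[\text{\rm{succ}}]\geq M^{-1}\exp D_2\!\left(\rho_{XB}\,\big\|\,\tfrac{1}{M}\rho_{XB}+\big(1-\tfrac1M\big)\rho_X\otimes\rho_B\right),$$
and then invoke Theorem~\ref{thm:collision-spectrum} with $\rho=\rho_{XB}$, $\sigma=\rho_X\otimes\rho_B$, $\lambda=1/M$, and the parameter $\epsilon$ there set to $\delta$, yielding
$$\e_{\mc}\Pr[\text{\rm{succ}}]\geq \frac{1-\delta}{M}\left[\tfrac1M+\big(1-\tfrac1M\big)\exp\big(-D_s^{\delta}(\rho_{XB}\|\rho_X\otimes\rho_B)\big)\right]^{-1}.$$
Writing $L=\exp D_s^{\delta}(\rho_{XB}\|\rho_X\otimes\rho_B)$, the right side simplifies to $(1-\delta)\big[1+(M-1)/L\big]^{-1}$, so $\e_{\mc}\Pr[\text{\rm{succ}}]\geq (1-\delta)/\big(1+(M-1)/L\big)$. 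Now choosing $M$ as large as possible subject to $(M-1)/L\leq \tfrac{1-\delta}{1-\epsilon}-1=\tfrac{\epsilon-\delta}{1-\epsilon}$ --- that is, $M\leq \tfrac{\epsilon-\delta}{1-\epsilon}L+1$, which is exactly the stated floor --- gives $\e_{\mc}\Pr[\text{\rm{succ}}]\geq 1-\epsilon$ on average over the codebook.

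Since the expected success probability over the random codebook is at least $1-\epsilon$, there exists at least one deterministic codebook $\mc^\star$ achieving $\Pr[\text{\rm{succ}}]\geq 1-\epsilon$, i.e.\ average error at most $\epsilon$, and we are done for the average-error criterion as stated. (If one wanted maximal error, the final step would be the usual expurgation: throw away the worst half of the codewords to convert average error $\epsilon$ into maximal error $2\epsilon$ with $M/2$ messages; but the corollary only claims the average-error bound, so this is not needed.)

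The only genuinely delicate point is bookkeeping the monotonicity direction in $M$: one must check that the bound $(1-\delta)/\big(1+(M-1)/L\big)$ is decreasing in $M$, so that taking $M$ as the floor of $\tfrac{\epsilon-\delta}{1-\epsilon}L+1$ is exactly the borderline case still satisfying $\geq 1-\epsilon$; this is immediate since $(M-1)/L$ is increasing in $M$. A secondary point worth a sentence is verifying the hypotheses of Theorem~\ref{thm:collision-spectrum} are met --- $\rho_{XB}$ and $\rho_X\otimes\rho_B$ are genuine density matrices and $0<\delta,\,1/M<1$ (the latter needing $M\geq 2$, which holds whenever the claimed $M$ exceeds $1$, i.e.\ whenever $\tfrac{\epsilon-\delta}{1-\epsilon}L\geq 1$; for smaller $L$ the statement is vacuous as one can always send $M=1$ message with zero error). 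I expect no real obstacle here --- the work was all in the two preceding theorems, and this corollary is essentially arithmetic plus the probabilistic-method selection of a good codebook.
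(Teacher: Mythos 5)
Your proposal is correct and follows exactly the route the paper intends: the corollary is stated as an immediate combination of Theorem~\ref{thm:oshot1} and Theorem~\ref{thm:collision-spectrum} (with $\lambda=1/M$ and the spectrum parameter set to $\delta$), and your algebra showing $\e_{\mc}\Pr[\text{\rm{succ}}]\geq(1-\delta)\big(1+(M-1)/L\big)^{-1}\geq 1-\epsilon$ for the stated choice of $M$, followed by the probabilistic-method selection of a good codebook, is precisely the omitted bookkeeping. Your side remarks (monotonicity in $M$, the trivial case $M=1$ where $\lambda<1$ would fail) are fine and do not change the argument.
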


\subsection{Second order asymptotics of c-q channels}\label{sec:soa}

For a channel $x\mapsto \rho_x$ let $M^*(n, \epsilon)$ be the maximum size of a codebook with codewords of length $n$ whose average probability of successful decoding (under the optimal decoding algorithm) is at least $1-\epsilon$. Then by the HSW theorem $\log M^*(n, \epsilon)$ (for small $\epsilon>0$)  is roughly equal to $nC$ where $C$ is the Holevo information of the channel given by
\begin{align}\label{eq:capacity}
C=\max_{p_X} I(X; B)_{\rho}.
\end{align}
Here $I(X; B)_{\rho}$ denotes the mutual information corresponding to the joint state~\eqref{eq:rho-x-b-1}. Our goal in the second order analysis is to find a more accurate estimate of $\log M^*(n, \epsilon)$. 
Based on the method of types, it is already shown by Winter~\cite{Winter} that 
$\log M^*(n, \epsilon) = nC + O(\sqrt n).$
So we may write $\log M^*(n, \epsilon) = nC + \sqrt n f(\epsilon) + o(\sqrt n)$. By computing the second order asymptotics we mean finding $f(\epsilon)$.

The second order asymptotics of c-c channels was first computed by Strassen~\cite{Strassen}. Under a mild condition on the channel, he showed that for every $0<\epsilon<1/2$,
\begin{align}\label{eq:st-dispersion}
\log M^*(n, \epsilon) = nC + \sqrt{nV} \Phi^{-1}(\epsilon) + o(\log n).
\end{align}
Here $V$ is a parameter of the channel (independent of $\epsilon$) which following~\cite{PPV} we call \emph{channel dispersion}.  Channel dispersion has recently been studied further by Polyanskiy et al.~\cite{PPV} and Hayashi~\cite{Hayashi2}. For c-q channels, the second order asymptotics has been computed very recently by Tomamichel and Tan~\cite{MarcoTan}. Here we re-derive the achievability part of their result. 


\begin{theorem}\label{thm:channeldispersion}
For every c-q channel $x\mapsto \rho_x$ and $0<\epsilon<1$ we have
$$\log M^*(n, \epsilon) \geq nC + \sqrt{nV_{\epsilon}} \Phi^{-1}(\epsilon) + O(\log n),$$
where $C$ is the channel capacity given by~\eqref{eq:capacity}. Also, $V_\epsilon$ is the channel dispersion given by 
$$V_{\epsilon}=\begin{cases}
\min_{p_{X}\in \mathcal P}V(\rho_{XB}\|\rho_X\otimes \rho_B), \qquad 0<\epsilon<1/2,\\
\max_{p_{X}\in \mathcal P}V(\rho_{XB}\|\rho_X\otimes \rho_B), \qquad 1/2\leq\epsilon<1,
\end{cases}
$$
where $\rho_{XB}$ is defined in~\eqref{eq:rho-x-b-1}, $\mathcal P$ is the set of capacity achieving input distributions, i.e. the set of distributions $p_X$ that achieve the optimal value in~\eqref{eq:capacity}, and $V(\cdot\|\cdot)$ is given in Definition \ref{def:infvar}.
\end{theorem}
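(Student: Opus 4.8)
The plan is to combine the one-shot bound of Corollary~\ref{corol:os2} with the lower bound on collision relative entropy via information spectrum relative entropy, and then invoke the known second-order asymptotics of $D_s^\epsilon$ from Theorem~\ref{thm:Ds-asymptotics}. First I would fix a capacity-achieving input distribution $p_X^\star$ attaining the minimum in the definition of $V$, and apply it as a product distribution $p_X^{\star\otimes n}$ on the $n$-fold channel $x^n\mapsto \rho_{x^n}=\rho_{x_1}\otimes\cdots\otimes\rho_{x_n}$. The induced joint state is $\rho_{XB}^{\otimes n}$, so that $\rho_X\otimes\rho_B$ on the $n$-fold system is $(\rho_X\otimes\rho_B)^{\otimes n}$. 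Applying Corollary~\ref{corol:os2} with error parameter $\epsilon$ and a slack $\delta=\delta_n$ proportional to $1/\sqrt n$, one can transmit $M$ messages with average error at most $\epsilon$ provided
$$
\log M \;=\; \log\!\left\lfloor \frac{\epsilon-\delta_n}{1-\epsilon}\,\exp\!\big(D_s^{\delta_n}\big(\rho_{XB}^{\otimes n}\big\|(\rho_X\otimes\rho_B)^{\otimes n}\big)\big)+1\right\rfloor .
$$
Hence $\log M^*(n,\epsilon)\geq D_s^{\delta_n}(\rho_{XB}^{\otimes n}\|(\rho_X\otimes\rho_B)^{\otimes n}) + \log\frac{\epsilon-\delta_n}{1-\epsilon} + O(1)$.

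Next I would substitute the asymptotic expansion of Theorem~\ref{thm:Ds-asymptotics}. Taking $\delta_n$ of order $1/\sqrt n$ but with $\delta_n\to 0$ slowly enough that $\Phi^{-1}(\epsilon-\delta_n)=\Phi^{-1}(\epsilon)+O(\delta_n)=\Phi^{-1}(\epsilon)+O(1/\sqrt n)$, Theorem~\ref{thm:Ds-asymptotics} gives
$$
D_s^{\delta_n}\big(\rho_{XB}^{\otimes n}\big\|(\rho_X\otimes\rho_B)^{\otimes n}\big) \;=\; n\,D(\rho_{XB}\|\rho_X\otimes\rho_B) + \sqrt{n\,V(\rho_{XB}\|\rho_X\otimes\rho_B)}\,\Phi^{-1}(\epsilon) + O(1).
$$
Since $p_X^\star$ is capacity-achieving, $D(\rho_{XB}\|\rho_X\otimes\rho_B)=I(X;B)_\rho=C$, and by the choice of $p_X^\star$, $V(\rho_{XB}\|\rho_X\otimes\rho_B)=V$. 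The correction term $\log\frac{\epsilon-\delta_n}{1-\epsilon}$ contributes $O(\log n)$ when $\delta_n$ is polynomial in $1/\sqrt n$ (e.g.\ $\delta_n = n^{-1/4}$ gives $O(\log n)$; even $\delta_n=1/\sqrt n$ gives $O(\log n)$). Collecting everything yields $\log M^*(n,\epsilon)\geq nC + \sqrt{nV}\,\Phi^{-1}(\epsilon) + O(\log n)$, as claimed.

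There is one genuine subtlety to address, which I expect to be the main obstacle: the argument above assumed $\Phi^{-1}(\epsilon)$ — i.e., that the second-order term genuinely appears with a finite constant — but more importantly, the bound from Corollary~\ref{corol:os2} is only meaningful for $\epsilon > \delta_n > 0$, and the prefactor $\frac{\epsilon-\delta_n}{1-\epsilon}$ must be handled to ensure it only costs $O(\log n)$ rather than corrupting the $\sqrt n$ term; this is why $\delta_n$ is taken to vanish. A second point worth checking is the restriction $0<\epsilon<1/2$: this enters because $V$ is defined as a minimum over capacity-achieving distributions, and for $\Phi^{-1}(\epsilon)$ negative (i.e., $\epsilon<1/2$) choosing the distribution with the \emph{smallest} variance indeed gives the best (largest) lower bound $\sqrt{nV}\,\Phi^{-1}(\epsilon)$; for $\epsilon\geq 1/2$ one would instead want the largest variance, so the statement is naturally phrased for $\epsilon<1/2$. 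Finally, one should confirm that Theorem~\ref{thm:Ds-asymptotics} applies with the two-sided parameter $\epsilon\pm\delta$ so that the small perturbation $\delta_n$ in the error is absorbed into the $O(1)$ term; this is exactly the form in which that theorem is stated, so no additional work is needed there.
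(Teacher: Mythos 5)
Your overall strategy is the paper's: fix a variance-minimizing capacity-achieving $p_X^\star$, use Corollary~\ref{corol:os2} on the product channel with the i.i.d.\ input $p_X^{\star\otimes n}$, and feed the resulting $D_s$ term into Theorem~\ref{thm:Ds-asymptotics}. However, the way you instantiate the parameter $\delta$ in Corollary~\ref{corol:os2} breaks the key step. In that corollary the superscript of the information spectrum quantity is $\delta$ itself, and $D_s^{\delta}$ is nondecreasing in $\delta$ with second-order behaviour governed by $\Phi^{-1}(\delta)$. You take $\delta=\delta_n\to 0$ (of order $n^{-1/2}$ or $n^{-1/4}$) and then claim
$D_s^{\delta_n}\big(\rho_{XB}^{\otimes n}\big\|(\rho_X\otimes\rho_B)^{\otimes n}\big)=nD+\sqrt{nV}\,\Phi^{-1}(\epsilon)+O(1)$.
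This is not what Theorem~\ref{thm:Ds-asymptotics} gives: that theorem expands $D_s^{\epsilon\pm\delta}$ for a \emph{fixed} $\epsilon\in(0,1)$ and a perturbation $\delta\propto 1/\sqrt n$, i.e.\ the superscript must stay in a $1/\sqrt n$-neighbourhood of the fixed $\epsilon$. With superscript $\delta_n\to 0$ the theorem does not apply at all, and heuristically one would get a second-order term of order $\sqrt{nV}\,\Phi^{-1}(\delta_n)\approx-\sqrt{nV\log n}$, which swamps the $\sqrt{nV}\,\Phi^{-1}(\epsilon)$ term you want and only yields $\log M^*(n,\epsilon)\geq nC-O(\sqrt{n\log n})$. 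Your parenthetical appeal to $\Phi^{-1}(\epsilon-\delta_n)$ suggests you were implicitly thinking of the superscript as $\epsilon-\delta_n$, but that is not the quantity appearing in the bound as you wrote it.

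The fix is exactly the paper's choice: take $\delta=\epsilon-1/\sqrt n$ in Corollary~\ref{corol:os2}, so that the superscript is $\epsilon-1/\sqrt n$ (to which Theorem~\ref{thm:Ds-asymptotics} applies, absorbing the $1/\sqrt n$ shift of $\epsilon$ into the $O(1)$), while the prefactor becomes $\log\frac{\epsilon-\delta}{1-\epsilon}=-\tfrac12\log n+O(1)=O(\log n)$. With that substitution the rest of your argument (identifying $D(\rho_{XB}\|\rho_X\otimes\rho_B)=C$ for capacity-achieving $p_X^\star$ and picking the minimizer of $V(\rho_{XB}\|\rho_X\otimes\rho_B)$ over $\mathcal P$, which is the right choice since $\Phi^{-1}(\epsilon)<0$ for $\epsilon<1/2$) goes through and reproduces the theorem.
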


\begin{proof} By Corollary~\ref{corol:os2}, for every distribution $p_X$ and $\epsilon>\delta>0$ we have
$$\log M^*(n,\epsilon) \geq D_s^{\delta} (\rho_{XB}^{\otimes n} \| \rho_X^{\otimes n}\otimes \rho_B^{\otimes n}) + \log\left( \frac{\epsilon -\delta}{1-\epsilon}\right).$$
Letting $\delta = \epsilon - 1/\sqrt n$, using Theorem~\ref{thm:Ds-asymptotics}, and optimizing over $p_X$ give the desired result.
\end{proof}

In particular, this theorem finds a new proof for the achievability part of the HSW theorem (i.e., maximum mutual information is an achievable rate for c-q channel coding) via a pretty good measurement that is \emph{directly} applied to signal states. To the best  knowledge of the authors, none of the previous proofs of the achievability part of HSW has this feature.

\section{Quantum hypothesis testing}
Suppose that a physical system is randomly prepared in one of the two states $\rho, \sigma$, which are called the hypotheses. To distinguish which hypothesis is the case we apply a POVM measurement $\{F_{\rho}, F_{\sigma}\}$ on the system. Such a measurement may cause an error in detecting the right hypothesis, and the goal of the hypothesis testing problem is to find the smallest possible probability of such an error.  Indeed there are two types of error:
Type I error is defined to be the probability of mis-detecting the hypothesis when the system is prepared in state $\rho$, and Type II error is defined similarly when the system is prepared in state $\sigma$. Then we have 
$$\Pr[\text{type I error}] = \tr(\rho F_{\sigma}), $$ and $$\Pr[\text{type II error}]=\tr(\sigma F_{\rho}).$$

In the asymmetric hypothesis testing problem we assume that the cost associated to type II error is much higher than the cost corresponding to type I error. So the probability of type II error should be minimized, while we only put a bound on the probability of type I error.   Quantum Stein's lemma~\cite{HiaiPetz, ON00} states that for every $\epsilon>0$, there is a POVM $\{F_{\rho}^{(n)}, F_{\sigma}^{(n)}\}$ 
to distinguish the hypotheses $\rho^{\otimes n}$ and $\sigma^{\otimes n}$  
 such that 
$\Pr[\text{type I error}]\leq \epsilon$ and 
$$\Pr[\text{type II error}]\leq \exp\big(-nD(\rho\| \sigma) + o(n)\big).$$ Moreover, $D(\rho\| \sigma)$ is the optimal such error exponent.
Also the one-shot hypothesis testing problem has been studied in~\cite{DMSB, Hayashi}.
Here we first prove a one-shot bound for the quantum hypothesis problem and then compute the asymptotics of our bound.

\begin{theorem}\label{thm:thm4} For every $\epsilon> \delta>0$ there is a POVM measurement $\{F_\rho, F_\sigma\}$ for the one-shot hypothesis testing problem such that 
$\Pr[\text{\rm{type I error}}]  \leq \epsilon$ and 
$$\Pr[\text{\rm{type II error}}]  \leq \exp\big(-D_s^{\epsilon-\delta}(\rho\| \sigma) - \log \delta\big).$$
\end{theorem}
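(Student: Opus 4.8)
The plan is to reduce the one-shot hypothesis testing problem to a ``c-q channel with two inputs'' so that the machinery of Section~\ref{sec:sectionII} applies almost verbatim, and then to invoke Theorem~\ref{thm:collision-spectrum}. Concretely, I would consider the (trivial) situation where the sender wants to transmit one of $M=2$ messages, encoded by the two states $\rho$ and $\sigma$ themselves, using the uniform input distribution $p_X(0)=p_X(1)=1/2$. The pretty good measurement for the ensemble $\{\tfrac12\rho,\tfrac12\sigma\}$ then gives a POVM $\{E_0,E_1\}$, and I would set $F_\rho := E_0$, $F_\sigma := E_1$. By Theorem~\ref{thm:oshot1} (applied in the non-random, $M=2$ case, so the expectation is vacuous), the probability of correctly identifying $\rho$ when the state is $\rho$ is $\tr(F_\rho\rho)$, and one gets a lower bound on the average success probability $\tfrac12\tr(F_\rho\rho)+\tfrac12\tr(F_\sigma\sigma)$ in terms of $\exp D_2$ of the appropriate states. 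The subtlety is that hypothesis testing is asymmetric: we do not care about the average, we need $\tr(\rho F_\sigma)$ small while only controlling $\tr(\rho F_\rho)$ loosely. So this symmetric setup is not directly the right one.

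Instead I would proceed more directly. Define $\sigma' := \tfrac{1}{2}\rho + \tfrac{1}{2}\sigma$ — or more precisely mimic the structure $\lambda\rho + (1-\lambda)\sigma$ appearing in Theorem~\ref{thm:collision-spectrum} — and consider the binary POVM obtained from the operators $\Pi' := \Pi_{\rho > 2^R\sigma}$ for $R := D_s^\epsilon(\rho\|\sigma)$, exactly as in the proof of Theorem~\ref{thm:collision-spectrum}. Take $F_\rho := \Pi'$ and $F_\sigma := \Pi = I - \Pi'$. Then by the very definition of the information spectrum relative entropy, $\tr(\rho\Pi) = \tr(\rho\Pi_{\rho\le 2^R\sigma})\le\epsilon$, which already gives $\Pr[\text{type I error}] = \tr(\rho F_\sigma) \le \epsilon$ — in fact better than the claimed $2\epsilon$, leaving slack. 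The key step is then to bound the type II error $\tr(\sigma F_\rho) = \tr(\sigma\Pi')$. On the support of $\Pi'$ we have $\Pi'(\rho - 2^R\sigma)\Pi'\ge 0$, hence $\Pi'\sigma\Pi'\le 2^{-R}\Pi'\rho\Pi'$, and therefore $\tr(\sigma\Pi') = \tr(\Pi'\sigma\Pi')\le 2^{-R}\tr(\Pi'\rho\Pi')\le 2^{-R}$. With $R = D_s^\epsilon(\rho\|\sigma)$ this yields $\Pr[\text{type II error}]\le 2^{-D_s^\epsilon(\rho\|\sigma)}$, which is even stronger than the stated bound $\exp(-D_s^\epsilon(\rho\|\sigma) - \log\epsilon) = \epsilon^{-1}2^{-D_s^\epsilon(\rho\|\sigma)}$.

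Since this cruder argument gives bounds strictly stronger than the theorem asserts, the author probably has in mind a softer, ``randomized'' test that trades a worse type II exponent for robustness, paralleling the c-q channel argument where one does not use the sharp threshold test but the pretty good measurement. So the route I would actually write up is: introduce the two-point ensemble with a \emph{biased} prior or an auxiliary flag system so that the pretty good measurement becomes the transpose channel of $\rho$ against the mixture $\epsilon\rho + (1-\epsilon)\sigma$ (or similar), express $\tr(F_\rho\rho)$ and $\tr(F_\sigma\sigma)$ via $\exp D_2$ as in Theorem~\ref{thm:oshot1}, apply joint convexity / the chain in Theorem~\ref{thm:collision-spectrum} to lower bound the relevant success probability, and finally read off the two error bounds. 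I expect the main obstacle to be bookkeeping: choosing the prior weights and the reference state inside $D_2(\cdot\|\cdot)$ so that \emph{simultaneously} the type I error comes out as $2\epsilon$ (the factor $2$ is the telltale sign that a union-bound-like or Jensen step is being applied to two contributions at once) and the type II error exponent is exactly $D_s^\epsilon(\rho\|\sigma) + \log\epsilon$. Getting those constants to line up, rather than any single inequality, will be the delicate part; each individual estimate (data processing for $D_2$, monotonicity in the second argument, the definition of $D_s^\epsilon$) is routine.
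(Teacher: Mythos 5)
Your middle argument is a correct and complete proof, and it takes a genuinely different route from the paper. You use the projective threshold test $F_\rho=\Pi_{\rho>2^R\sigma}$, $F_\sigma=I-F_\rho$ with $R=D_s^{\epsilon}(\rho\|\sigma)$: the definition of $D_s^\epsilon$ gives $\tr(\rho F_\sigma)\leq\epsilon$, and $\Pi'(\rho-2^R\sigma)\Pi'\geq 0$ gives $\tr(\sigma F_\rho)\leq 2^{-R}$, which indeed beats both stated constants. The paper instead keeps to its collision-entropy engine: it takes the pretty good measurement for the weighted pair $\{\rho,M\sigma\}$, i.e.\ $F_\rho=(\rho+M\sigma)^{-1/2}\rho(\rho+M\sigma)^{-1/2}$ and $F_\sigma=(M^{-1}\rho+\sigma)^{-1/2}\sigma(M^{-1}\rho+\sigma)^{-1/2}$, writes $\tr(\rho F_\rho)=\exp D_2(\rho\|\rho+M\sigma)$ and $\tr(\sigma F_\sigma)=\exp D_2(\sigma\|M^{-1}\rho+\sigma)$, bounds the first via Theorem~\ref{thm:collision-spectrum} and the second via positivity of $D_2$, and sets $M=\exp\big(D_s^\epsilon(\rho\|\sigma)+\log\epsilon\big)$; the constants then come from $(1-\epsilon)/(1+\epsilon)\geq 1-2\epsilon$ and $\Pr[\text{type II error}]\leq M^{-1}$, so the factor $2$ is not the trace of a union-bound or Jensen step as you guessed, and there is no delicate bookkeeping hidden in the paper's route --- the construction you sketch in your final paragraph (PGM against a reweighted mixture) is exactly what the paper does, and your decision to abandon your completed argument to chase it was unnecessary. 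What each approach buys: yours is more elementary and gives strictly stronger bounds (type I $\leq\epsilon$, type II $\leq 2^{-D_s^\epsilon(\rho\|\sigma)}$); the paper's choice illustrates that the same transpose-channel measurement and $D_2$-machinery used for c-q coding also handles hypothesis testing, at the cost of slightly weaker constants. One technical caveat applies equally to you and to the paper: the supremum defining $D_s^\epsilon$ need not be attained at $R=D_s^\epsilon(\rho\|\sigma)$, so strictly one should take $R$ slightly below it (the loss is absorbed by the slack $-\log\epsilon$); the paper's own proof of Theorem~\ref{thm:collision-spectrum} makes the same elision.
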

\begin{proof}
Consider the POVM with elements 
$$
F_{\rho}=(\rho + M\sigma)^{-1/2} \rho (\rho+M\sigma)^{-1/2}$$and $$\qquad F_{\sigma}=(M^{-1}\rho + \sigma)^{-1/2} \sigma (M^{-1}\rho+\sigma)^{-1/2},
$$
where $M$ is a positive real number to be determined. Observe that the choice of this POVM is motivated by the proof of Theorem~\ref{thm:oshot1}. The probability of type I error is equal to 
$$\Pr[\text{type I error}]  = 1-\tr\left( \rho F_{\rho} \right) =1-\exp D_2(\rho\| \rho+ M\sigma).$$
Then using Theorem~\ref{thm:collision-spectrum} we have
\begin{align*}
1-\Pr[&\text{type I error}] \\
&\geq (1-(\epsilon-\delta))\left(  1+ M\exp\big( -D_s^{\epsilon-\delta}(\rho\| \sigma)   \big)   \right)^{-1}.
\end{align*}
The probability that type II error event does not occur, is equal to 
\begin{align*}
1- &\Pr[\text{type II error}] 
\\& =\exp D_2(\sigma \| M^{-1}\rho+ \sigma)\\
&= (1+M^{-1})^{-1}  \exp D_2\Big(\sigma  \big\| \frac{M^{-1}}{1+ M^{-1}}\rho + \frac{1}{1+M^{-1}}\sigma\Big)\\
& \geq (1+M^{-1})^{-1}, 
\end{align*}
where in the last line we use the positivity of collision relative entropy (see Theorem~\ref{thm:collision-pr}). Letting $M=\exp\big(D_s^{\epsilon-\delta}(\rho\| \sigma) + \log \delta \big)$ gives the desired result.
\end{proof}

The measurement $\{F_\rho, F_\sigma\}$ used in the above proof is a pretty good measurement associated to unnormalized signal states $\{\rho, M\sigma\}$. Should we allow general types of measurement, we can let $F_\rho=\Pi_{\rho>2^R\sigma}$ and $F_\sigma=\Pi_{\rho\leq 2^R\sigma}$ for an appropriate choice of $R$. With this measurement we get an even better bound, comparing to that given in Theorem~\ref{thm:thm4}. However, the significance of the above proof is that a pretty good measurement gives a one-shot bound for the hypothesis testing problem.

The second order asymptotics of quantum hypothesis testing has been found independently in \cite{TomamichelHayashi} and \cite{Li}. The achievability part of their result can be derived using Theorem~\ref{thm:Ds-asymptotics} and Theorem~\ref{thm:thm4} for $\delta=1/\sqrt n$.

\begin{theorem} 
For every $\epsilon>0$ and $n$ there exists a POVM  to distinguish $\rho^{\otimes n}$ and $\sigma^{\otimes n}$ such that
$\Pr[\text{\rm{type I error}}] \leq \epsilon$ and that
\begin{align*}-\log \Pr&[\text{\rm{type II error}}]\geq nD(\rho\| \sigma) + \sqrt{n V(\rho\| \sigma)}\Phi^{-1}(\epsilon) + O(\log n). 
\end{align*}
\end{theorem}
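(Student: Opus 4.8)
The plan is to apply Theorem~\ref{thm:thm4} with the single-shot states replaced by the $n$-fold tensor powers $\rho^{\otimes n}$ and $\sigma^{\otimes n}$, and then feed in the asymptotic expansion of the information spectrum relative entropy from Theorem~\ref{thm:Ds-asymptotics}. Concretely, I would first fix $\epsilon>0$ and, to absorb the factor-of-two loss in the type~I error guarantee of Theorem~\ref{thm:thm4}, run that theorem with parameter $\epsilon/2$. This gives, for each $n$, a POVM $\{F_\rho^{(n)},F_\sigma^{(n)}\}$ on the $n$-fold system with $\Pr[\text{type I error}]\leq\epsilon$ and
$$-\log\Pr[\text{type II error}]\geq D_s^{\epsilon/2}(\rho^{\otimes n}\|\sigma^{\otimes n}) + \log(\epsilon/2).$$

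Next I would invoke Theorem~\ref{thm:Ds-asymptotics} to replace the information spectrum quantity by its second-order expansion. Here one has to be slightly careful about the $\pm\delta$ in the statement of Theorem~\ref{thm:Ds-asymptotics}: that theorem is phrased with a perturbation $\delta\propto 1/\sqrt n$ in the error parameter, so $D_s^{\epsilon/2}(\rho^{\otimes n}\|\sigma^{\otimes n})$ and $D_s^{\epsilon}(\rho^{\otimes n}\|\sigma^{\otimes n})$ differ only in the $O(1)$ (hence $o(\sqrt n)$) term, because $\Phi^{-1}$ is continuous and the shift in its argument from $\epsilon$ to $\epsilon/2$ changes the leading $\sqrt n$ coefficient — so actually one should be a bit more careful and note that here, unlike in the channel-coding argument, the constant-factor slack in the error probability does change the first-order term unless handled correctly. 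The clean fix is to instead apply Theorem~\ref{thm:thm4} with parameter $\epsilon_n := \epsilon - c/\sqrt n$ for a suitable constant $c$, so that $2\epsilon_n$ is still below the target only asymptotically; but since the theorem is stated with a fixed $\epsilon$ on the left, the correct route is to observe that we may simply redefine the target as $\epsilon$, run Theorem~\ref{thm:thm4} at $\epsilon$ directly accepting type~I error $\le 2\epsilon$, and note the statement we are proving is for an arbitrary $\epsilon>0$ — so relabelling $\epsilon \leftarrow \epsilon/2$ at the very end recovers exactly the claimed form. Then $\log\Pr[\text{type II error}]\le -D_s^{\epsilon}(\rho^{\otimes n}\|\sigma^{\otimes n}) - \log\epsilon = -nD(\rho\|\sigma) - \sqrt{nV(\rho\|\sigma)}\,\Phi^{-1}(\epsilon) + O(1)$ by Theorem~\ref{thm:Ds-asymptotics}, and $O(1) = o(\sqrt n)$.

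Finally, for the matching lower bound on $-\log\Pr[\text{type II error}]$ — the statement is an equality, so we also need that no POVM can do asymptotically better — I would appeal to the converse (optimality) part of the second-order asymptotics of quantum hypothesis testing, which is exactly the content cited from \cite{TomamichelHayashi} and \cite{Li}; the excerpt only promises to re-derive the achievability part, so in the write-up I would state that the converse is quoted from those references and only the achievability direction ($\geq$, i.e. existence of a good POVM) follows from our Theorem~\ref{thm:thm4}.

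The main obstacle is the bookkeeping around the error-parameter perturbation: making sure that the factor of $2$ in the type~I error of Theorem~\ref{thm:thm4} and the $\pm\delta \propto 1/\sqrt n$ slack in Theorem~\ref{thm:Ds-asymptotics} are reconciled so that the first-order term $\sqrt{nV}\,\Phi^{-1}(\epsilon)$ comes out with exactly the argument $\epsilon$ and not $2\epsilon$ or $\epsilon/2$. Everything else is a direct substitution: the existence of the POVM and the one-shot bound are handed to us by Theorem~\ref{thm:thm4}, and the expansion is handed to us by Theorem~\ref{thm:Ds-asymptotics}, with all error terms safely of order $O(1) = o(\sqrt n)$.
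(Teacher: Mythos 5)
Your overall route---specialize Theorem~\ref{thm:thm4} to $\rho^{\otimes n},\sigma^{\otimes n}$, expand the information spectrum term via Theorem~\ref{thm:Ds-asymptotics}, and quote the converse from \cite{TomamichelHayashi,Li} since only achievability is being re-derived---is indeed the intended one. The gap is in how you dispose of the factor $2$ in the type~I error of Theorem~\ref{thm:thm4}. Relabelling $\epsilon\leftarrow\epsilon/2$ at the end does \emph{not} recover the claimed form: running Theorem~\ref{thm:thm4} at parameter $\epsilon/2$ (so that the type~I error is at most $\epsilon$) gives $-\log\Pr[\text{type II error}]\geq D_s^{\epsilon/2}(\rho^{\otimes n}\|\sigma^{\otimes n})+\log(\epsilon/2)=nD(\rho\|\sigma)+\sqrt{nV(\rho\|\sigma)}\,\Phi^{-1}(\epsilon/2)+O(1)$, and since $\Phi^{-1}(\epsilon/2)<\Phi^{-1}(\epsilon)$ the shortfall is $\Theta(\sqrt n)$, not $o(\sqrt n)$; you end up with a strictly weaker statement than the theorem. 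Your earlier instinct (a vanishing perturbation $\epsilon-c/\sqrt n$) was the right one, and you should not have abandoned it for the constant-factor relabelling.

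The correct reconciliation is to make the slack polynomially small instead of a constant fraction of $\epsilon$, which requires reopening the proof of Theorem~\ref{thm:thm4} rather than using its statement as a black box (the statement has already rounded the type~I bound up to $2\epsilon$). Take $\delta_n=1/\sqrt n$ and choose $M=\delta_n\exp\big(D_s^{\epsilon-\delta_n}(\rho^{\otimes n}\|\sigma^{\otimes n})\big)$ in that proof. The type~I estimate there becomes $\Pr[\text{type I error}]\leq 1-\frac{1-(\epsilon-\delta_n)}{1+\delta_n}=\frac{\epsilon}{1+\delta_n}\leq\epsilon$, while $\Pr[\text{type II error}]\leq M^{-1}$, i.e.\ $-\log\Pr[\text{type II error}]\geq D_s^{\epsilon-\delta_n}(\rho^{\otimes n}\|\sigma^{\otimes n})+\log\delta_n$. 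Now the $\pm\delta\propto 1/\sqrt n$ allowance built into Theorem~\ref{thm:Ds-asymptotics} is exactly what absorbs the perturbed parameter $\epsilon-\delta_n$: the right-hand side equals $nD(\rho\|\sigma)+\sqrt{nV(\rho\|\sigma)}\,\Phi^{-1}(\epsilon)+O(1)-\tfrac12\log n$, which is $nD(\rho\|\sigma)+\sqrt{nV(\rho\|\sigma)}\,\Phi^{-1}(\epsilon)+o(\sqrt n)$ with the argument of $\Phi^{-1}$ equal to $\epsilon$ as required (compare the choice $\delta=\epsilon-1/\sqrt n$ in the proof of Theorem~\ref{thm:channeldispersion}). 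With this fix, the remainder of your write-up, including deferring the matching converse to \cite{TomamichelHayashi,Li}, is fine.
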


\section{Data compression with quantum side information}
Let 
$$\rho_{XB}= \sum_x p_X(x)\ket x\bra x\otimes \rho_x,$$
be a c-q state. Suppose that Alice receives $x$ with probability $p_X(x)$, in which case Bob receives system $B$ in state $\rho_x$. The goal of Alice is to transmit $x$ to Bob by sending a message $m$. To this end, they may fix an encoding hash function $g:\mathcal X\rightarrow \{1, \dots, M\}$. Then Alice after receiving $x$ sends $m=g(x)$ to Bob.  Now, Bob has access to $\rho_x$ and $m=g(x)$; his goal is to guess $x$. He knows that $x$ is in the set $g^{-1}(m)=\{x'|\, g(x')=m\}$. So to pick an element of $g^{-1}(m)$, Bob applies some measurement 
$$\{F_{x'}^{m}|\, x'\in \mathcal X, g(x')=m\},$$ 
on his state $\rho_x$. Note that this measurement depends on Alice's message $m$ and its outcome is some element of $g^{-1}(m)$. Assuming that Alice's input is $x$, the probability that Bob successfully decodes $x$ is then equal to $\tr(\rho_xF_x^{m})$ where $m=g(x)$. This can  be equivalently written as 
$$\Pr[\text{succ} | X=x] = \sum_m \mathbf{1}_{\{m=g(x)\}}  \tr(\rho_xF^m_x),$$
where $\mathbf{1}_{\{m=g(x)\}}$ is equal to $1$ if $g(x)=m$ and is equal to $0$ otherwise.
As a result, 

the probability of successful decoding is equal to 
\[\Pr[\text{\rm{succ}}] = \sum_{x, m} p_X(x)\mathbf{1}_{\{m=g(x)\}}  \tr(\rho_xF^m_x).\]

 This problem is called classical source coding (or data compression) with quantum side information, also known as the c-q Slepian-Wolf problem. Needless to say, the goal is to minimize $M$ while keeping the average probability of successful decoding close to one.  It is shown in~\cite{cqSW} that the optimal asymptotic rate of communication required to achieve this goal is the conditional entropy $H(X|B)_{\rho}$. A one-shot achievable and converse bound for this problem is derived in~\cite{RennerRenes}.
Moreover, the second order asymptotics of this problem has been computed in~\cite{TomamichelHayashi}. Here we prove a one-shot achievable bound for this problem.

\begin{theorem} For every $\epsilon>0$, there is an encoding map $g:\mathcal X\rightarrow \{1, \dots, M\}$ and decoding measurements   $\{F_x^{m}|\, x\in \mathcal X, g(x)=m\}$ such that the probability of correct decoding is bounded by 
\begin{align*}\Pr[\text{\rm{succ}}] 
&\geq 
\exp D_2\left(\rho_{XB} \big\|   \big(1-\frac{1}{M}\big) \rho_{XB} + \frac{1}{M} I_X\otimes \rho_B   \right)\\&\geq
\frac{1-\epsilon}{1+ M^{-1}\exp\big(-D_s^{\epsilon}(\rho_{XB}\| I_X\otimes \rho_B)\big)}.\end{align*}
\label{thm:thm6}
\end{theorem}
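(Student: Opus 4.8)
The plan is to mimic the structure of the proof of Theorem~\ref{thm:oshot1}, replacing random codebook generation by random hashing. Concretely, I would let $h:\mathcal X\to\{1,\dots,M\}$ be a random function with $h(x)$ chosen uniformly and independently for each $x$, and define Bob's decoding measurement given message $m$ to be the pretty good measurement for the (sub-normalized) ensemble $\{p_X(x)\rho_x : h(x)=m\}$, i.e.
$$F_x^m = \Big(\sum_{x': h(x')=m} p_X(x')\rho_{x'}\Big)^{-1/2} p_X(x)\rho_x \Big(\sum_{x': h(x')=m} p_X(x')\rho_{x'}\Big)^{-1/2},$$
for $x$ with $h(x)=m$. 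The first step is to recognize, exactly as in Theorem~\ref{thm:oshot1}, that for a fixed hash function the success probability can be written as an exponential of a collision relative entropy. Introducing the (random, sub-normalized) operators $\tau_{XB}=\sum_x p_X(x)\ket x\bra x\otimes\rho_x$ grouped by hash value, one checks that $\Pr[\text{\rm{succ}}]=\exp D_2(\tau_{XB}\,\|\,\tau'_{XB})$ for an appropriate $\tau'$ built from the block-diagonal (in the hash index) structure, after tracing out the hash register; this is the analogue of the ``$\tr(E_m\rho_{\tx_m})$'' computation.

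The second step is to take the expectation over the random choice of $h$ and apply data processing plus joint convexity of $\exp D_2$ from Theorem~\ref{thm:collision-pr}, pulling the expectation inside both arguments. The key computation is then $\E_h[\text{first argument}]=\rho_{XB}$ and $\E_h[\text{second argument}]=(1-\tfrac1M)\rho_{XB}+\tfrac1M I_X\otimes\rho_B$. The reasoning for the second expectation parallels the codebook case: for a fixed $x$, the operator appearing in Bob's denominator is $p_X(x)\rho_x$ plus $\sum_{x'\neq x}\mathbf 1_{\{h(x')=h(x)\}}p_X(x')\rho_{x'}$, and since $\Pr[h(x')=h(x)]=1/M$ the cross terms average to $\tfrac1M\sum_{x'\neq x}p_X(x')\rho_{x'}$, which combined with the diagonal term and the $\ket x\bra x$ factors gives $(1-\tfrac1M)\rho_{XB}+\tfrac1M I_X\otimes\rho_B$ (the $I_X$ rather than $\rho_X$ appears precisely because summing $\ket x\bra x\otimes(\text{something independent of which }x)$ over all $x$ yields $I_X\otimes\rho_B$). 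This establishes the first inequality in the theorem.

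The third step is purely a matter of invoking Theorem~\ref{thm:collision-spectrum} with $\rho\leftarrow\rho_{XB}$, $\sigma\leftarrow I_X\otimes\rho_B$, and $\lambda\leftarrow 1-\tfrac1M$ (noting $0<\lambda<1$ for $M\geq 2$), which immediately yields
$$\exp D_2\!\left(\rho_{XB}\,\big\|\,(1-\tfrac1M)\rho_{XB}+\tfrac1M I_X\otimes\rho_B\right)\geq \frac{1-\epsilon}{(1-\tfrac1M)+\tfrac1M\exp(-D_s^\epsilon(\rho_{XB}\|I_X\otimes\rho_B))},$$
and the denominator is $\leq 1+M^{-1}\exp(-D_s^\epsilon(\rho_{XB}\|I_X\otimes\rho_B))$ since $1-\tfrac1M\leq 1$, giving the stated second inequality. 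The main obstacle I anticipate is the bookkeeping in the first step: one must be careful that $\tau_{XB}$ and $\tau'_{XB}$ are sub-normalized and that the block structure in the hash index makes $\exp D_2$ split as a sum of nonnegative block contributions (the analogue of inequality~\eqref{eqn:aaaab2}), so that after discarding the hash register one still gets a clean identity; also one should confirm that $D_s^\epsilon$ is well-defined with the sub-normalized second argument $I_X\otimes\rho_B$, but since Theorem~\ref{thm:collision-spectrum} and $D_s^\epsilon$ are stated for general positive semi-definite operators this causes no difficulty. Finally, as a side remark, one should check that $\rho_{XB}$ can be processed by a channel into itself (i.e. that the data-processing step is legitimate) — this is immediate since tracing out the classical hash register is a channel.
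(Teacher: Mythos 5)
Your proposal is correct and follows essentially the same route as the paper: random binning, the per-bucket pretty good measurement \eqref{eqn:POVMBob}, rewriting the success probability as $\exp D_2$ of a collision relative entropy, pulling the expectation over $h$ inside via joint convexity (with $\E_\th$ of the second argument equal to $(1-\tfrac1M)\rho_{XB}+\tfrac1M I_X\otimes\rho_B$), and finally invoking Theorem~\ref{thm:collision-spectrum} with $\lambda=1-\tfrac1M$ and dropping the $1-\tfrac1M$ in the denominator. The only cosmetic difference is bookkeeping: the paper reduces to the bucket $m=1$ by symmetry and multiplies by $M$, whereas you keep the full block structure over all hash values, which works equally well.
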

\begin{proof}
Suppose that Alice's encoding map $g:\mathcal X\rightarrow \{1, \dots, M\}$ is  chosen randomly, i.e., $g(x)$ is chosen uniformly at random and independent of other $g(x')$, $x'\neq x$. As before we use $\th(x)$ (in \emph{mathtt} format) to denote this random function.
Let the decoding POVM of Bob when receiving $m\in \{1, \dots, M\}$ be $\{F_x^{m}|\, x\in\mathcal X,  \th(x)=m\}$ where $F_x^m$ is defined by
 \begin{align}
F_x^m= p_X(x)\Big(\sum_{x'}   p_X(x')\rho_{x'} \mathbf{1}_{\{\th(x')=m\}}   \Big)^{-1/2} \rho_x \Big(\sum_{x'}   p_X(x')\rho_{x'} \mathbf{1}_{\{\th(x')=m\}}   \Big)^{-1/2}.\label{eqn:POVMBob}
\end{align}
  Observe that this POVM is indeed a pretty good measurement corresponding to the unnormalized signal states $\{p_X(x')\rho_{x'}|\, x'\in \th^{-1}(m)\}$. 
Then the expectation value of the average probability of correct decoding is equal to 
\begin{align*}
\E_{\th} \Pr[\text{\rm{succ}}]& = \E_{\th} \sum_{x, m} p_X(x) \mathbf{1}_{\{\th(x)=m\}}  \tr\left(  \rho_x   F_x^m    \right)
\\&= M \E_{\th} \sum_{x} p_X(x) \mathbf{1}_{\{\th(x)=1\}}  \tr\left(  \rho_x   F_x^1    \right).
\end{align*}
This can be written in terms of collision relative entropy as $\E_{\th} \Pr[\text{\rm{succ}}] =M\E_{\th} \exp D_2(\sigma_{XB} \| \tau_{XB} )$
where 
$$\sigma_{XB} = \sum_x p(x) \mathbf{1}_{\{\th(x)=1\}} \ket x \bra x\otimes \rho_x,$$
and $$
\tau_{XB}= \Big(\sum_{x} \mathbf{1}_{\{\th(x)=1\}} \ket x \bra x\Big)\otimes \Big(   \sum_{x'} p(x') \mathbf{1}_{\{\th(x')=1\}} \rho_{x'}   \Big).$$
Then using Jensen's inequality and the joint convexity of $\exp D_2(\cdot \| \cdot)$ we arrive at 
$$\E_{\th} \Pr[\text{\rm{succ}}] \geq M \exp D_2(\E_{\th}\, \sigma_{XB} \| \E_{\th}\, \tau_{XB}).$$
We have $\E_{\th}\, \sigma_{XB} = \frac{1}{M}\rho_{XB}$, and
\begin{align*}
\E_{\th} \tau_{XB} & = \E_{\th} \sum_{x, x'} p(x') \mathbf{1}_{\{\th(x)=1\}}\mathbf{1}_{\{\th(x')=1\}} \ket x\bra x\otimes \rho_{x'}\\ 
&=  \sum_{x, x'} p(x') \Big(\frac{1}{M} \mathbf{1}_{\{x=x'\}} + \frac{1}{M^2}\mathbf{1}_{\{x\neq x'\}}\Big) \ket x\bra x\otimes \rho_{x'}\\
& = \big( \frac{1}{M} - \frac{1}{M^2} \big)\rho_{XB} + \frac{1}{M^2} I_X\otimes \rho_{B}.
\end{align*}
Finally, using Theorem~\ref{thm:collision-spectrum} we obtain
\begin{align*}
\E_{\th}\Pr[\text{\rm{succ}}]  &\geq M \exp D_2 \left( \frac{1}{M}\rho_{XB} \big\|  \big( \frac{1}{M} - \frac{1}{M^2} \big)\rho_{XB} + \frac{1}{M^2} I_X\otimes \rho_{B}   \right) \\
&=\exp D_2\left(\rho_{XB} \big\|   \big(1-\frac{1}{M}\big) \rho_{XB} + \frac{1}{M} I_X\otimes \rho_B   \right)\\
& \geq (1-\epsilon)\left[  1-M^{-1} + M^{-1}\exp\big(- D_s^{\epsilon}(\rho_{XB}\| I_X\otimes \rho_B)\big)   \right]^{-1}\\
& \geq (1-\epsilon)\left[  1 + M^{-1}\exp\big(- D_s^{\epsilon}(\rho_{XB}\| I_X\otimes \rho_B)\big)   \right]^{-1}.\qedhere
\end{align*}
\end{proof}

\begin{corollary}
For every $\epsilon >\delta >0$ there is a protocol for classical data compression with quantum side information with
$$ M= \lceil \exp\big(    -D_s^{\delta}(\rho_{XB} \| I_X\otimes \rho_B) - \log(\epsilon -\delta)   \big) \rceil,$$
whose probability of error is at most $\epsilon$.
\end{corollary}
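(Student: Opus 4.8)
The plan is to read the corollary off directly from Theorem~\ref{thm:thm6}, the only real work being a relabeling of parameters: the slack appearing \emph{inside} the information spectrum relative entropy will be $\delta$, while $\epsilon$ plays the role of the total error budget.

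First I would apply Theorem~\ref{thm:thm6} with its ``$\epsilon$'' replaced by ``$\delta$''. This yields a randomized encoding $\th:\mathcal X\to\{1,\dots,M\}$ together with the decoding measurements $\{F_x^m\}$ of~\eqref{eqn:POVMBob} such that
\[
\E_{\th}\Pr[\text{\rm{succ}}] \;\geq\; \frac{1-\delta}{1+M^{-1}\exp\!\big(-D_s^{\delta}(\rho_{XB}\| I_X\otimes\rho_B)\big)}.
\]
Next I would substitute the prescribed value of $M$. Since $\exp(-\log(\epsilon-\delta))=(\epsilon-\delta)^{-1}$ (all logarithms and exponentials being base $2$), we have $M=\big\lceil\exp\big(-D_s^{\delta}(\rho_{XB}\| I_X\otimes\rho_B)-\log(\epsilon-\delta)\big)\big\rceil\geq (\epsilon-\delta)^{-1}\exp\big(-D_s^{\delta}(\rho_{XB}\| I_X\otimes\rho_B)\big)$, hence $M^{-1}\exp\big(-D_s^{\delta}(\rho_{XB}\| I_X\otimes\rho_B)\big)\leq \epsilon-\delta$, and therefore
\[
\E_{\th}\Pr[\text{\rm{succ}}] \;\geq\; \frac{1-\delta}{1+\epsilon-\delta}.
\]
A one-line check — expanding $(1-\epsilon)(1+\epsilon-\delta)=1-\delta-\epsilon(\epsilon-\delta)$ and using $\epsilon>\delta>0$, so $\epsilon(\epsilon-\delta)\geq0$ — gives $\tfrac{1-\delta}{1+\epsilon-\delta}\geq 1-\epsilon$. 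Thus $\E_{\th}\Pr[\text{\rm{succ}}]\geq 1-\epsilon$, i.e.\ the expected probability of error (over the random choice of hash function) is at most $\epsilon$.

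Finally I would conclude by the standard averaging argument: since the \emph{average} error over the random $\th$ is at most $\epsilon$, there exists at least one deterministic hash function $h:\mathcal X\to\{1,\dots,M\}$, with the corresponding measurements $\{F_x^m\mid x\in\mathcal X,\ h(x)=m\}$, whose error probability is at most $\epsilon$; this is the desired protocol. I do not expect any genuine obstacle here — all the content lives in Theorem~\ref{thm:thm6}. The only points needing care are bookkeeping: keeping the parameter $\delta$ inside $D_s^{\delta}$ distinct from the error budget $\epsilon$, confirming $M\geq 1$ so that the protocol is well defined, and verifying that the ceiling in the definition of $M$ leaves enough slack for the final inequality $\tfrac{1-\delta}{1+\epsilon-\delta}\geq 1-\epsilon$ (in fact one could even afford the extra factor $1-\epsilon$ that a tighter analysis would produce).
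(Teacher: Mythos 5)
Your proposal is correct and is exactly the intended argument: the paper leaves this corollary unproved as an immediate consequence of Theorem~\ref{thm:thm6}, and your relabeling ($\epsilon\to\delta$ inside $D_s$), substitution of the prescribed $M$, and the elementary check $\tfrac{1-\delta}{1+\epsilon-\delta}\geq 1-\epsilon$ is precisely how it follows. The only cosmetic redundancy is re-invoking the averaging over $\th$, since Theorem~\ref{thm:thm6} as stated already asserts the existence of a fixed $h$.
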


\begin{remark} 
Again using Theorem~\ref{thm:Ds-asymptotics} the achievability part of the asymptotic analysis of~\cite{TomamichelHayashi} for the c-q Slepian-Wolf problem can be derived from the above corollary.
\end{remark}


\section{Conclusion}
We proposed a partial quantum extension of the technique of \cite{Yassaee} and applied it to three basic information theoretic problems in the quantum case. In our generalization we noticed that some of the expressions of~\cite{Yassaee} can be written in terms of collision relative entropy, and used the joint convexity of its exponential. A full generalization of the technique of \cite{Yassaee} to more complicated scenarios such as channels with state, and quantum Marton coding requires new tools to meet the challenges of dealing with non-commuting operators as well as proving  (joint) operator convexity of certain functions to apply Jensen's inequality. 

We found one-shot achievability bounds for the problems of c-q channel coding, quantum hypothesis testing and source compression with quantum side information. From the expressions of these one-shot bounds, it is not clear to us how these bounds compare with previously existing bounds.


\section*{Acknowledgements}

The authors are thankful to unknown referees whose comments significantly improved the presentation of the paper.


\begin{thebibliography}{10}

\bibitem{Yassaee} M. H. Yassaee, M. R. Aref, A. Gohari,  A Technique for Deriving One-Shot Achievability Results in Network Information Theory, \emph{IEEE Symposium on Information Theory}, 2013, available at arXiv: 1303.0696.


\bibitem{NgM} H. K. Ng, and P. Mandayam, Simple approach to approximate quantum error correction based on the transpose channel, Physical Review A {\bf 81}, 062342 (2010).




\bibitem{RennerThesis} R. Renner, Security of Quantum Key Distribution, PhD thesis, ETH Zurich, Dec. 2005.
arXiv: quant-ph/0512258.


\bibitem{TomamichelHayashi} M. Tomamichel and M. Hayashi, A Hierarchy of Information Quantities for Finite Block Length Analysis of Quantum Tasks, IEEE Trans. Inf. Theory 59 (11), 7693-7710 (2013).


\bibitem{MarcoTan}M. Tomamichel and V. Y. F. Tan, Second-Order Asymptotics of Classical-Quantum Channels,  arXiv:1308.6503.


\bibitem{Li} K. Li. Second Order Asymptotics for Quantum Hypothesis Testing, Annals of Statistics 2014, Vol. 42, No. 1, 171-189.

\bibitem{WangRenner} L. Wang and R. Renner. One-Shot Classical-Quantum Capacity and Hypothesis Testing. Phys. Rev. Lett., 108 (20), May 2012. 

\bibitem{DMSB} N. Datta, M. Mosonyi, M. H. Hsieh, F. G. Brandao, A smooth entropy approach to quantum hypothesis testing and the classical capacity of quantum channels, IEEE Trans. Inf. Theory 59 (12), 8014-8026 (2013).
	



\bibitem{HayashiNagaoka} M. Hayashi and H. Nagaoka. General Formulas for Capacity of Classical-Quantum Channels. IEEE Trans. IT , 49 (7):1753-1768, July 2003.

\bibitem{Hayashi} M. Hayashi, Optimal sequence of quantum measurements in the sense of Stein's lemma in quantum hypothesis testing, Journal of Physics A: Mathematical and General,  35 (50), 10759-10773, 2002. 


\bibitem{RennerRenes} J. M. Renes and R. Renner, One-Shot Classical Data Compression With Quantum Side Information and the Distillation of Common Randomness or Secret Keys, IEEE Trans. Inf. Theory 58, 1985 (2012).



\bibitem{MCW2} M. Berta, P. J. Coles, and S. Wehner, An equality between entanglement and uncertainty, arXiv:1302.5902 (2013).

\bibitem{DFW2} F. Dupuis, O. Fawzi, and S. Wehner, Entanglement sampling and applications,
arXiv: 1305.1316 (2013).

\bibitem{DBWR} F. Dupuis, M. Berta, J. Wullschleger, and R. Renner, One-shot decoupling, Comm. Math. Phys. 328 (1), 251-284 (2014).

\bibitem{Lennertetal} M. M\"uller-Lennert, F. Dupuis, O. Szehr, S. Fehr, and M. Tomamichel, On quantum R\'enyi entropies: A new generalization and some properties, J. Math. Phys. 54 (12), 122203 (2013).


\bibitem{Wildeetal} M. Wilde, A. Winter, and D. Yang, Strong converse for the classical capacity of entanglement-breaking and Hadamard channels via a sandwiched R\'enyi relative entropy, Comm. Math. Phys.  331 (2), 593-622 (2014).


\bibitem{TMarco} M. Tomamichel, Smooth entropies--a tutorial: With focus on applications in cryptography, Tutorial at QCRYPT 2012, slides available at \url{http://2012.qcrypt.net/docs/slides/Marco.pdf}  (2012).


\bibitem{Fehr} S. Fehr, On the conditional R\'enyi entropy, Lecture at the Beyond IID Workshop at the University of Cambridge, January 2013.

\bibitem{FrankLieb} R. L. Frank, E. H. Lieb, Monotonicity of a relative R\'enyi entropy, J. Math. Phys. 54 (12),122201 (2013).

\bibitem{Beigi2} S. Beigi, Sandwiched R\'enyi Divergence Satisfies Data Processing Inequality, 	J. Math. Phys. 54 (12), 122202 (2013).


\bibitem{Winter} A. Winter. Coding Theorem and Strong Converse for Quantum Channels. IEEE Trans. Inf. Theory, 45 (7): 2481-2485, 1999.

 

\bibitem{Strassen} V. Strassen. Asymptotische Absch\"atzungen in Shannons Informationstheorie. In Trans. Third
Prague Conf. Inf. Theory, 689-723, Prague, 1962.

\bibitem{PPV} Y. Polyanskiy, H. V. Poor, and S. Verd\'u, Channel Coding Rate in the Finite Blocklength Regime.
IEEE Trans. Inf. Theory, 56(5):2307-2359 (2010).


\bibitem{Hayashi2} M. Hayashi, Information Spectrum Approach to Second-Order Coding Rate in Channel Coding.
IEEE Trans. Inf. Theory, 55(11):4947-4966 (2009).




\bibitem{HiaiPetz} F. Hiai and D. Petz, The Proper Formula for Relative Entropy and its Asymptotics in Quantum Probability, Comm. Math. Phys. 143 (1), 99-114 (1991).



\bibitem{ON00} T. Ogawa and H. Nagaoka. Strong Converse and Stein's Lemma in the Quantum Hypothesis Testing. IEEE Trans. Inf. Theory. 46, 2428 (2000).





\bibitem{cqSW} I. Devetak and A. Winter, Classical data compression with quantum side information, Phys. Rev. A 68, 042301 (2003).



\end{thebibliography}
\end{document}